\newtheorem{theorem}{Theorem}[section]
\newtheorem{proposition}[theorem]{Proposition}
\newtheorem{definition}[theorem]{Definition}
\newtheorem{lemma}[theorem]{Lemma}
\newtheorem{conjecture}[theorem]{Conjecture}
\newtheorem{corollary}[theorem]{Corollary}
\newtheorem{remark}[theorem]{Remark}
\newcommand{\qedsymb}{\hfill{\rule{2mm}{2mm}}}
\renewenvironment{proof}[1][]{\begin{trivlist}
\item[\hspace{\labelsep}{\bf\noindent Proof#1:\/}] }{\qedsymb\end{trivlist}}
\def\calS{{\cal S}}
\def\calM{{\cal M}}
\def\calP{{\cal P}}
\def\R{\mathbb{R}}
\newcommand\Prob[2]{{\Pr_{#1}\left[ {#2} \right]}}
\newcommand\Kneser[3]{K^<(#1,#2,#3)}
\newcommand{\eps}{\epsilon}
\renewcommand{\epsilon}{\varepsilon}
\newcommand{\minrank}{\mathop{\mathrm{minrk}}}
\newcommand{\Fset}{\mathbb{F}}         
\begin{document}

\title{{\bf On Minrank and Forbidden Subgraphs}}

\author{
Ishay Haviv\thanks{School of Computer Science, The Academic College of Tel Aviv-Yaffo, Tel Aviv 61083, Israel.
}
}

\date{}

\maketitle

\begin{abstract}
The {\em minrank} over a field $\Fset$ of a graph $G$ on the vertex set $\{1,2,\ldots,n\}$ is the minimum possible rank of a matrix $M \in \Fset^{n \times n}$ such that $M_{i,i} \neq 0$ for every $i$, and $M_{i,j}=0$ for every distinct non-adjacent vertices $i$ and $j$ in $G$.
For an integer $n$, a graph $H$, and a field $\Fset$, let $g(n,H,\Fset)$ denote the maximum possible minrank over $\Fset$ of an $n$-vertex graph whose complement contains no copy of $H$.
In this paper we study this quantity for various graphs $H$ and fields $\Fset$.
For finite fields, we prove by a probabilistic argument a general lower bound on $g(n,H,\Fset)$, which yields a nearly tight bound of $\Omega(\sqrt{n}/\log n)$ for the triangle $H=K_3$.
For the real field, we prove by an explicit construction that for every non-bipartite graph $H$, $g(n,H,\R) \geq n^\delta$ for some $\delta = \delta(H)>0$.
As a by-product of this construction, we disprove a conjecture of Codenotti, Pudl\'ak, and Resta.
The results are motivated by questions in information theory, circuit complexity, and geometry.
\end{abstract}

\section{Introduction}

An $n \times n$ matrix $M$ over a field $\Fset$ is said to {\em represent} a digraph $G=(V,E)$ with vertex set $V = \{1,2,\ldots,n\}$ if $M_{i,i} \neq 0$ for every $i$, and $M_{i,j}=0$ for every distinct $i,j$ such that $(i,j) \notin E$. The {\em minrank} of $G$ over $\Fset$, denoted ${\minrank}_\Fset(G)$, is the minimum possible rank of a matrix $M \in \Fset^{n \times n}$ representing $G$. The definition is naturally extended to (undirected) graphs by replacing every edge with two oppositely directed edges.
It is easy to see that for every graph $G$ the minrank parameter is sandwiched between the independence number and the clique cover number, that is, $\alpha(G) \leq {\minrank}_\Fset(G) \leq \chi(\overline{G})$.
For example, ${\minrank}_\Fset(K_n)=1$ and ${\minrank}_\Fset(\overline{K_n})=n$ for every field $\Fset$.
The minrank parameter was introduced by Haemers in 1979~\cite{Haemers79}, and since then has attracted a significant attention motivated by its various applications in information theory and in theoretical computer science (see, e.g.,~\cite{Haemers81,BBJK06,Valiant92,Riis07,PudlakRS97,HavivL13,ChlamtacH14}).

In this work we address the extremal behavior of the minrank parameter of $n$-vertex graphs whose complements are free of a fixed forbidden subgraph.
For two graphs $G$ and $H$, we say that $G$ is {\em $H$-free} if $G$ contains no subgraph, induced or not, isomorphic to $H$.
For an integer $n$, a graph $H$, and a field $\Fset$, let $g(n,H,\Fset)$ denote the maximum of ${\minrank}_\Fset(G)$ taken over all $n$-vertex graphs $G$ whose complement $\overline{G}$ is $H$-free.
Our purpose is to study the quantity $g(n,H,\Fset)$ where $H$ and $\Fset$ are fixed and $n$ is growing.

\subsection{Our Contribution}

We provide bounds on $g(n,H,\Fset)$ for various graph families and fields.
We start with a simple upper bound for a forest $H$.

\begin{proposition}\label{prop:forestIntro}
For every integer $n$, a field $\Fset$, and a nontrivial forest $H$ on $h$ vertices,
\[g(n,H,\Fset) \leq h-1.\]
Equality holds whenever $H$ is a tree and $n \geq h-1$.
\end{proposition}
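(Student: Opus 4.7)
The plan is to combine the standard inequality $\minrank_\Fset(G) \leq \chi(\overline{G})$ recorded in the introduction with the observation that every $H$-free graph is $(h-1)$-colorable. To establish the latter, I would first prove the following greedy embedding lemma: every graph $F$ with minimum degree at least $h-1$ contains every forest on $h$ vertices as a (not necessarily induced) subgraph. Write $H$ as a disjoint union of trees $T_1, \ldots, T_c$ with $|T_i| = h_i$ and $\sum_i h_i = h$, and embed the $T_i$'s one by one. Having placed $T_1, \ldots, T_{i-1}$ on $h_1 + \cdots + h_{i-1} \leq h - h_i$ vertices, the remaining induced subgraph of $F$ still has minimum degree at least $(h-1) - (h-h_i) = h_i - 1$. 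In it one embeds $T_i$ vertex by vertex along a BFS ordering from an arbitrary root: the parent of each new vertex has at least $h_i - 1$ neighbors, of which at most $h_i - 2$ have already been used, so an unused neighbor is always available.

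Applying the lemma to every induced subgraph of $\overline{G}$ (each of which is again $H$-free) shows that $\overline{G}$ is $(h-2)$-degenerate, hence $\chi(\overline{G}) \leq h-1$, which completes the upper bound. For the matching lower bound in the case where $H$ is a tree and $n \geq h-1$, I would take $G$ to be the $n$-vertex graph consisting of $h-1$ pairwise non-adjacent vertices together with $n - h + 1$ universal vertices (each adjacent to every other vertex, including each other). Then $\overline{G}$ is the disjoint union of a $(h-1)$-clique with $n - h + 1$ isolated vertices; since $H$ is connected, any embedding of $H$ into $\overline{G}$ would have its $h$ vertices lying in a single connected component of $\overline{G}$, but no such component has more than $h-1$ vertices, so $\overline{G}$ is $H$-free. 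On the other hand, the $h-1$ non-adjacent vertices of $G$ form an independent set of size $h-1$, so $\minrank_\Fset(G) \geq \alpha(G) \geq h-1$, matching the upper bound.

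The only step with any real content is the greedy embedding of the forest; the tree case is a folklore exercise, and the extension to forests reduces to the one-line bookkeeping $h_1 + \cdots + h_i \leq h$ used above. Once the degeneracy of $\overline{G}$ is in hand, both bounds fall out immediately, and I do not foresee any genuine obstacle.
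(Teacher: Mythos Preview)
Your proof is correct and follows essentially the same route as the paper: establish that $\overline{G}$ is $(h-2)$-degenerate via a greedy embedding of $H$ into any subgraph of minimum degree $\geq h-1$, deduce $\chi(\overline{G})\leq h-1$, and exhibit an explicit graph with $\alpha(G)=h-1$ for the matching lower bound. The only cosmetic differences are that the paper first reduces the forest case to the tree case (any $h$-vertex forest sits inside some $h$-vertex tree) rather than embedding the components separately as you do, and for the tightness example the paper takes $\overline{G}$ to be a disjoint union of $(h-1)$-cliques rather than a single $K_{h-1}$ plus isolated vertices; both choices work equally well.
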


We next provide a general lower bound on $g(n,H,\Fset)$ for a graph $H$ and a finite field $\Fset$.
To state it, we need the following notation.
For a graph $H$ with $h \geq 3$ vertices and $f \geq 3$ edges define $\gamma(H) = \frac{h-2}{f-1}$ and $\gamma_0(H) = \min_{H'}{\gamma(H')}$, where the minimum is taken over all subgraphs $H'$ of $H$ with at least $3$ edges.

\begin{theorem}\label{thm:IntroComp}
For every graph $H$ with at least $3$ edges there exists $c=c(H)>0$ such that for every integer $n$ and a finite field $\Fset$,
\[g(n,H,\Fset) \geq c \cdot \frac{n^{1-\gamma_0(H)}}{\log (n \cdot |\Fset|)} .\]
\end{theorem}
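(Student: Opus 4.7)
The plan is a probabilistic construction. Sample $\overline{G}\sim G(n,q)$ with $q := c_1\cdot n^{-\gamma_0(H)}$ for a small constant $c_1=c_1(H)>0$, let $D$ be the set of edges of $\overline{G}$ that lie in at least one copy of $H$, and set $\overline{G}^{*}:=\overline{G}\setminus D$ and $G^{*}:=(\overline{G}^{*})^{c}$. By construction $\overline{G}^{*}$ is $H$-free, so it suffices to show that with positive probability $\minrank_\Fset(G^{*})\geq k$ for $k := c_2\cdot n^{1-\gamma_0(H)}/\log(n|\Fset|)$ with $c_2=c_2(H)>0$.

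The definition $\gamma_0(H)=\min_{H'\subseteq H,\,e(H')\geq 3}(h'-2)/(f'-1)$ is precisely the density threshold at which, for every subgraph $H'\subseteq H$ with at least three edges, the expected number of copies of $H'$ through a fixed edge of $\overline{G}$ is $O(1)$. Consequently $\mathbb{E}[|D|]=O(n^{2}q)$, and a short calculation gives $|D|=o(n^{2})$ in expectation: the deletion removes a vanishing fraction of the potential edge set while killing all copies of $H$.

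The lower bound on minrank is then obtained by a union bound over low-rank representing matrices. Every $M\in\Fset^{n\times n}$ of rank at most $k$ factors as $M=AB^{\top}$ with $A,B\in\Fset^{n\times k}$, so there are at most $|\Fset|^{2nk}$ such matrices, and $M$ represents $G^{*}$ precisely when $\overline{G}^{*}$ has no edge inside the off-diagonal support $S_M$. Since $\overline{G}^{*}=\overline{G}\setminus D$, this happens only if every edge of $\overline{G}$ inside $S_M$ also lies in $D$, and the probability of this is small because ``generic'' rank-$k$ matrices have $|S_M|=\Omega(n^{2})$---each off-diagonal entry $\langle A_i,B_j\rangle=0$ being a codimension-one condition---while $|D|=o(n^{2})$. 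Rank-deficient pathologies (in which $A$ or $B$ has small column rank) are handled by stratifying the sum by rank profile and bounding their contribution separately. This yields
\[
\Pr\bigl[\minrank_\Fset(G^{*})\leq k\bigr]\;\leq\;|\Fset|^{2nk}\cdot\exp\bigl(-\Omega(qn^{2})\bigr),
\]
which is less than $1$ for $k$ of order $qn/\log(n|\Fset|)=n^{1-\gamma_0(H)}/\log(n|\Fset|)$, giving the claimed bound.

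The main obstacle I anticipate is controlling the correlation between $\overline{G}$ and the random deletion set $D$: although $|D|$ is small in expectation, $D$ is a complicated function of $\overline{G}$, so $\Pr[M\text{ represents }G^{*}]$ cannot be estimated by direct independence. The plan is to show that for each $e\in S_M$, conditional on $e\in E(\overline{G})$, there is a positive probability (bounded away from zero) that $e$ lies in no copy of $H$---so that $e$ survives into $\overline{G}^{*}$---and then a Chernoff/FKG-type argument across the (approximately independent) edges of $S_M$ yields the estimate $\exp(-\Omega(qn^{2}))$ demanded by the union bound.
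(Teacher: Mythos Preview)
Your outline has a structural gap in the union-bound step that the ``rank-profile'' stratification you propose does not close. The trouble is not with rank-deficient factorizations but with \emph{full-rank} factorizations that are nonetheless sparse: for instance, the block-diagonal matrix consisting of $k$ all-ones blocks of size $(n/k)\times(n/k)$ has rank exactly $k$, factors $A,B\in\Fset^{n\times k}$ of full column rank, and yet $|S_M|=n^2/k$. There is an entire spectrum of sparsities between $n^2/(4k)$ (the minimum for diagonal-nonzero rank-$k$ matrices) and $n^2$, and these matrices are invisible to any stratification by the rank profile of $A,B$. Against the sparsest ones your bound reads $|\Fset|^{2nk}\cdot\exp(-\Omega(q\cdot n^2/k))<1$, which only yields $k=O\bigl(\sqrt{qn/\log|\Fset|}\bigr)=O\bigl(n^{(1-\gamma_0)/2}\bigr)$, quadratically worse than the claim.

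What the paper actually does is precisely the sparsity-stratified count you are missing: using the sparse-base machinery of Golovnev, Regev and Weinstein, it shows that, after passing to a suitable principal submatrix, the number of diagonal-nonzero rank-$\leq k$ matrices of sparsity $s'$ is at most $(n|\Fset|)^{O(s'k/n)}$, so that the count and the per-matrix probability $\exp(-\Omega(qs'))$ balance at \emph{every} sparsity level once $k$ is of order $qn/\log(n|\Fset|)$. This trade-off is the heart of the argument and is not recoverable from the crude factorization count $|\Fset|^{2nk}$.

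Two further remarks. First, the paper does not use alteration at all: it applies the Lov\'asz Local Lemma directly to two families of bad events---``$\overline{G}$ contains a copy of $H$ on the set $I$'' and ``this particular sparse-base matrix represents the induced subgraph on $R$''---thereby sidestepping entirely the correlation between $D$ and $\overline{G}$ that you flag as the main obstacle. Second, your reading of $\gamma_0$ is inverted: at $q=n^{-\gamma_0(H)}$ the \emph{minimizing} subgraph $H^*$ has $\Theta(1)$ expected copies through each edge, but $H$ itself has $n^{(h-2)-\gamma_0(f-1)}\to\infty$ such copies whenever $\gamma_0(H)<\gamma(H)$, so your deletion set $D$ would swallow essentially all of $\overline{G}$. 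The paper handles this by first proving the bound with $\gamma(H)$ and then passing to the minimizing subgraph $H^*$ (for which $\gamma(H^*)=\gamma_0(H)$), using that any $H^*$-free graph is automatically $H$-free.
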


Note that for every finite field $\Fset$, the quantity $g(n,H,\Fset)$ grows with $n$ if and only if $H$ is not a forest.
Indeed, if $H$ is a forest then $g(n,H,\Fset)$ is bounded by some constant by Proposition~\ref{prop:forestIntro}, whereas otherwise $H$ satisfies $\gamma_0(H)<1$ and thus, by Theorem~\ref{thm:IntroComp}, $g(n,H,\Fset) \geq \Omega(n^\delta)$ for some $\delta = \delta(H)>0$.
Note further that for the case $H=K_3$, which is motivated by a question in information theory (see Section~\ref{sec:applications}),
Theorem~\ref{thm:IntroComp} implies that
\begin{eqnarray}\label{eq:K_3}
g(n,K_3,\Fset) \geq \Omega  \Big ( \frac{\sqrt{n}}{\log n} \Big )
\end{eqnarray}
for every fixed finite field $\Fset$.
This is tight up to a $\sqrt{\log n}$ multiplicative term (see Proposition~\ref{prop:K_3}).

Theorem~\ref{thm:IntroComp} is proved by a probabilistic argument based on the Lov\'asz Local Lemma~\cite{LLL75}.
The proof involves an approach of Spencer~\cite{Spencer77} to lower bounds on off-diagonal Ramsey numbers and a technique of Golovnev, Regev, and Weinstein~\cite{Golovnev0W17} for estimating the minrank of random graphs.

As our final result, we show that for every non-bipartite graph $H$ there are $H$-free graphs with low minrank over the real field $\R$.

\begin{theorem}\label{thm:IntroNonBi}
For every non-bipartite graph $H$ there exists $\delta=\delta(H)>0$ such that for every sufficiently large integer $n$, there exists an $n$-vertex $H$-free graph $G$ such that ${\minrank}_\R(G) \leq n^{1-\delta}$.
\end{theorem}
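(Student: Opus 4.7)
The plan is to construct an explicit $n$-vertex $H$-free graph $G$ with $\minrank_\R(G) \leq n^{1-\delta}$. A first observation is that a complete $(\chi(H)-1)$-partite graph is $H$-free (as its chromatic number is less than $\chi(H)$) but has $\minrank_\R$ only linear in $n$, since any graph with chromatic number at most $\chi(H)-1$ has independence number (and hence minrank) at least $n/(\chi(H)-1)$. Therefore the construction must produce a graph $G$ whose chromatic number is at least $\chi(H)$ but which still avoids $H$ as a subgraph for structural reasons, and which admits a low-dimensional representation.

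My approach would use the orthogonal-representation upper bound: $\minrank_\R(G) \leq d$ whenever there exist vectors $v_1, \ldots, v_n \in \R^d$ with $v_i \perp v_j$ for every non-edge $\{i,j\}$ of $G$, since the Gram matrix of the $v_i$ then represents $G$ and has rank at most $d$. It therefore suffices to exhibit $n$ vectors in $\R^d$ (with $d \leq n^{1-\delta}$) whose \emph{non-orthogonality graph} is $H$-free. I would build such a family algebraically: index vertices by $\Fset_q^k$ for parameters $q, k$ depending on $H$, and control inner products via a Cayley-type structure, so that the representing matrix has rank equal to the (small) Fourier support of a carefully chosen generator, while the non-orthogonality pattern avoids the additive configurations that would create a copy of $H$.

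The main obstacle is reconciling combinatorial $H$-freeness with the geometric constraint of low dimension. Standard graph products do not cleanly amplify a base case: the tensor product preserves $H$-freeness via $\chi(G_1 \times G_2) \leq \min(\chi(G_1), \chi(G_2))$, but the tensor of representing matrices has spurious nonzero entries at positions $((u,u'),(v,v'))$ with $u=v$ and so does not multiply minranks; the lexicographic product multiplies minranks but immediately introduces ``mixed'' triangles that destroy $H$-freeness. For $H = K_3$ the construction is the most delicate case, since Ramsey-type lower bounds for triangle-free graphs with small independence number already force $\delta \leq 1/2$, and the construction must approach this limit via an explicit algebraic or geometric arrangement of vectors whose pairwise non-orthogonality graph is triangle-free. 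Verifying both the $H$-free condition and the Fourier-support bound simultaneously, and then extracting the exponent $\delta(H)$, is the step I would expect to spend the bulk of the effort on.
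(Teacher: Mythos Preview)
Your outline has a concrete obstruction at the very first step. You propose to bound $\minrank_\R(G)$ via a Gram matrix of vectors $v_1,\ldots,v_n\in\R^d$ with $v_i\perp v_j$ on non-edges. But such a matrix is positive semidefinite, and the paper itself recalls Rosenfeld's theorem: any system of nonzero vectors in $\R^d$ in which every three contain an orthogonal pair has size at most $2d$. Translated, if the non-orthogonality graph of $n$ vectors in $\R^d$ is triangle-free then $d\geq n/2$. So for $H=K_3$ your Gram-matrix route cannot produce rank below $n/2$, and no choice of Cayley structure or Fourier support will rescue it; the positive-semidefiniteness is fatal. (Pudl\'ak proved an analogous linear lower bound for $C_5$, so the same obstruction hits small odd cycles generally.) Any successful construction must produce a \emph{non}-PSD representing matrix.

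The paper's argument differs from yours in two ways. First, it makes the reduction you are missing: since $H$ is non-bipartite it contains an odd cycle $C_\ell$, and any $C_\ell$-free graph is automatically $H$-free, so it suffices to build graphs of high odd girth with small minrank. Second, for the construction it takes the generalized Kneser graph $\Kneser{d}{d/2}{m}$ with $m=\lfloor d/(2\ell)\rfloor$: vertices are $\tfrac{d}{2}$-subsets of $[d]$, adjacent when their intersection has size below $m$. A short parity argument on intersection sizes along a path shows this graph has no odd cycle of length at most $\ell$. The minrank bound comes from the polynomial method: the matrix $M_{A,B}=\prod_{j=m}^{d/2-1}(|A\cap B|-j)$ represents the graph and, being a multilinear polynomial of degree $d/2-m$ in the coordinate indicators, has rank at most $\sum_{i\le d/2-m}\binom{d}{i}\le 2^{H(1/2-1/(2\ell))d}$, which is $n^{1-\delta}$ for $n=\binom{d}{d/2}$. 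This matrix is explicitly not PSD (its diagonal entries alternate in sign with the parity of $d/2-m$, for instance), which is exactly what lets it evade Rosenfeld's barrier.
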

\noindent
This theorem is proved by an explicit construction from the family of generalized Kneser graphs, whose minrank was recently studied in~\cite{Haviv18}.
It is known that every $n$-vertex graph $G$ satisfies
\begin{eqnarray}\label{eq:minrk_comp}
{\minrank}_\Fset(G) \cdot {\minrank}_\Fset(\overline{G}) \geq n
\end{eqnarray}
for every field $\Fset$ (see, e.g.,~\cite[Remark~2.2]{Peeters96}).
This combined with the graphs given in Theorem~\ref{thm:IntroNonBi} implies the following (explicit) lower bound on $g(n,H,\R)$ for non-bipartite graphs $H$.

\begin{corollary}\label{cor:IntroNonBi}
For every non-bipartite graph $H$ there exists $\delta=\delta(H)>0$ such that for every sufficiently large integer $n$,
$g(n,H,\R) \geq n^{\delta}$.
\end{corollary}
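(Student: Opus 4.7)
The plan is to derive Corollary~\ref{cor:IntroNonBi} directly from Theorem~\ref{thm:IntroNonBi} and the minrank--complement inequality~\eqref{eq:minrk_comp}. Given a non-bipartite graph $H$ and a sufficiently large integer $n$, Theorem~\ref{thm:IntroNonBi} furnishes an $n$-vertex $H$-free graph $G$ together with a constant $\delta = \delta(H) > 0$ such that $\minrank_\R(G) \leq n^{1-\delta}$. The natural witness for the lower bound on $g(n,H,\R)$ is the complement $\overline{G}$: its complement $\overline{\overline{G}} = G$ is $H$-free, so $\overline{G}$ belongs to the family over which $g(n,H,\R)$ is maximized.

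To conclude, I would apply inequality~\eqref{eq:minrk_comp} with $\Fset = \R$ to the pair $(G, \overline{G})$, which gives $\minrank_\R(G) \cdot \minrank_\R(\overline{G}) \geq n$. Combined with the upper bound on $\minrank_\R(G)$ supplied by Theorem~\ref{thm:IntroNonBi}, this yields
\[
\minrank_\R(\overline{G}) \;\geq\; \frac{n}{\minrank_\R(G)} \;\geq\; \frac{n}{n^{1-\delta}} \;=\; n^{\delta},
\]
and therefore $g(n,H,\R) \geq \minrank_\R(\overline{G}) \geq n^{\delta}$, with the same $\delta$ as produced by the theorem.

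The derivation is thus a short combination of existing facts, and there is no real obstacle beyond the statements it invokes. All the substantive content sits in Theorem~\ref{thm:IntroNonBi}, whose proof relies on an explicit construction from the family of generalized Kneser graphs (studied in~\cite{Haviv18}), and in the standard multiplicativity bound~\eqref{eq:minrk_comp} (e.g.,~\cite[Remark~2.2]{Peeters96}); once these are in hand, the corollary follows by taking complements.
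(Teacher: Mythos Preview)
Your proposal is correct and matches the paper's own derivation: the paper states the corollary immediately after Theorem~\ref{thm:IntroNonBi} and inequality~\eqref{eq:minrk_comp}, deriving it by exactly the complement-and-divide argument you give. There is nothing to add.
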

\noindent
As another application of Theorem~\ref{thm:IntroNonBi}, we disprove a conjecture of Codenotti, Pudl\'ak, and Resta~\cite{CodenottiPR00} motivated by Valiant's approach to circuit lower bounds~\cite{Valiant77} (see Section~\ref{sec:applications}).

\subsection{Applications}\label{sec:applications}

The study of the quantity $g(n,H,\Fset)$ is motivated by questions in information theory, circuit complexity, and geometry.
We gather here several applications of our results.

\paragraph{Shannon Capacity.}
For an integer $k$ and a graph $G$ on the vertex set $V$, let $G^k$ denote the graph on the vertex set $V^k$ in which two distinct vertices $(u_1,\ldots,u_k)$ and $(v_1,\ldots,v_k)$ are adjacent if for every $1 \leq i \leq k$ it holds that $u_i$ and $v_i$ are either equal or adjacent in $G$.
The Shannon capacity of a graph $G$, introduced by Shannon in 1956~\cite{Shannon56}, is defined as the limit $c(G) = \lim_{k \rightarrow \infty}{(\alpha(G^k))^{1/k}}$.
This graph parameter is motivated by information theory, as it measures the zero-error capacity of a noisy communication channel represented by $G$.
An upper bound on $c(G)$, known as the Lov\'asz $\vartheta$-function, was introduced in~\cite{Lovasz79}, where it was used to show that the Shannon capacity of the cycle on $5$ vertices satisfies $c(C_5)=\sqrt{5}$, whereas its independence number is $2$.
Haemers introduced the minrank parameter in~\cite{Haemers79,Haemers81} and showed that it forms another upper bound on $c(G)$ and that for certain graphs it is tighter than the $\vartheta$-function.
In general, computing the Shannon capacity of a graph seems to be a very difficult task, and its exact value is not known even for small graphs such as the cycle on $7$ vertices.

The question of determining the largest possible Shannon capacity of a graph with a given independence number is widely open.
In fact, it is not even known if the Shannon capacity of a graph with independence number $2$ can be arbitrarily large~\cite{AlonPowers02}.
Interestingly, Erd\"{o}s, McEliece, and Taylor~\cite{ErdosMT71} have shown that this question is closely related to determining an appropriate multicolored Ramsey number, whose study in~\cite{XiaodongZER04} implies that there exists a graph $G$ with $\alpha(G)= 2$ and $c(G)> 3.199$.
A related question, originally asked by Lov\'asz, is that of determining the maximum possible $\vartheta$-function of an $n$-vertex graph with independence number $2$. This maximum is known to be $\Theta(n^{1/3})$, where the upper bound was proved by Kashin and Konyagin~\cite{KasKon81,Kon81}, and the lower bound was proved by Alon~\cite{Alon94} via an explicit construction.
Here we consider the analogue question of determining the maximum possible minrank, over any fixed finite field $\Fset$, of an $n$-vertex graph with independence number $2$.
Since the latter is precisely $g(n,K_3,\Fset)$, our bound in~\eqref{eq:K_3} implies that the minrank parameter is weaker than the $\vartheta$-function with respect to the general upper bounds that they provide on the Shannon capacity of $n$-vertex graphs with independence number $2$.

\paragraph{The Odd Alternating Cycle Conjecture.}
In 1977, Valiant~\cite{Valiant77} proposed the matrix rigidity approach for proving superlinear circuit lower bounds, a major challenge in the area of circuit complexity.
Roughly speaking, the rigidity of a matrix $M \in \Fset^{n \times n}$ for a constant $\eps>0$ is the minimum number of entries that one has to change in $M$ in order to reduce its rank over $\Fset$ to at most $\eps \cdot n$. Valiant showed in~\cite{Valiant77} that matrices with large rigidity can be used to obtain superlinear lower bounds on the size of logarithmic depth arithmetic circuits computing linear transformations.
With this motivation, Codenotti, Pudl\'ak, and Resta~\cite{CodenottiPR00} raised in the late nineties the Odd Alternating Cycle Conjecture stated below, and proved that it implies, if true, that certain explicit circulant matrices have superlinear rigidity.
By an alternating odd cycle we refer to a digraph which forms a cycle when the orientation of the edges is ignored, and such that the orientation of the edges alternates with one exception.
\begin{conjecture}[The Odd Alternating Cycle Conjecture~\cite{CodenottiPR00}]\label{conj:alternating}
For every field $\Fset$ there exist $\eps >0$ and an odd integer $\ell$ such that every $n$-vertex digraph $G$ with ${\minrank}_\Fset (G)  \leq \eps \cdot n$ contains an alternating cycle of length $\ell$.
\end{conjecture}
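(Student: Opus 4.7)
The plan is to disprove Conjecture~\ref{conj:alternating} over the real field by invoking Theorem~\ref{thm:IntroNonBi}. Fix an arbitrary odd integer $\ell \geq 3$ and an arbitrary $\eps > 0$; the goal is to produce, for every sufficiently large $n$, an $n$-vertex digraph with minrank over $\R$ at most $\eps \cdot n$ and containing no alternating cycle of length $\ell$. This refutes the existence of any universal pair $(\eps, \ell)$ required by the conjecture for $\Fset = \R$.

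The bridge to Theorem~\ref{thm:IntroNonBi} is the following observation: if $G$ is an undirected graph and $\vec{G}$ is the digraph obtained by replacing every edge by two oppositely directed arcs, then $\vec{G}$ contains an alternating cycle of length $\ell$ if and only if $G$ contains $C_\ell$ as a subgraph. The forward direction is trivial, since forgetting orientations of an alternating $\ell$-cycle leaves a cycle of length $\ell$ in the underlying undirected graph. For the converse, take any $C_\ell$ in $G$ on vertices $v_1, \ldots, v_\ell$ and orient its edges in $\vec{G}$ alternately around the cycle; since $\ell$ is odd, strict alternation cannot close up, and exactly one unavoidable reversal occurs at one of the junctions, which is precisely the single exception permitted by the definition of an alternating cycle.

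With this in hand, apply Theorem~\ref{thm:IntroNonBi} to $H = C_\ell$, which is non-bipartite because $\ell$ is odd. This furnishes $\delta = \delta(\ell) > 0$ and, for every sufficiently large $n$, an $n$-vertex $C_\ell$-free graph $G_n$ with ${\minrank}_\R(G_n) \leq n^{1-\delta}$. Since by definition ${\minrank}_\R(\vec{G}_n) = {\minrank}_\R(G_n)$, and the bridge observation guarantees that $\vec{G}_n$ contains no alternating $\ell$-cycle, taking $n$ large enough so that $n^{1-\delta} \leq \eps \cdot n$ yields the desired counterexample. As $\ell$ and $\eps$ were arbitrary, the conjecture fails for $\Fset = \R$.

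The only step requiring any thought is the bridge observation, and even there the argument is essentially combinatorial bookkeeping about parity around the cycle. All the substance is absorbed into Theorem~\ref{thm:IntroNonBi}, whose explicit construction (via generalized Kneser graphs) is what actually makes this refutation possible; once that theorem is in hand, the disproof is a one-line reduction.
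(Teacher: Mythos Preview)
Your argument is correct and matches the paper's approach exactly: the paper simply states that Theorem~\ref{thm:IntroNonBi} applied to $H=C_\ell$ yields (undirected) $C_\ell$-free graphs with sublinear minrank over $\R$, which disproves Conjecture~\ref{conj:alternating} for $\Fset=\R$. Your bridge observation, relating alternating $\ell$-cycles in $\vec{G}$ to copies of $C_\ell$ in $G$, is left implicit in the paper but is precisely the translation needed, and your write-up of it is correct.
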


Codenotti et al.~\cite{CodenottiPR00} proved that the statement of Conjecture~\ref{conj:alternating} does not hold for $\ell=3$ over any field $\Fset$. Specifically, they provided an explicit construction of $n$-vertex digraphs $G$, free of alternating triangles, with ${\minrank}_\Fset (G)  \leq O(n^{2/3})$ for every field $\Fset$. For the undirected case, which is of more interest to us, a construction of~\cite{CodenottiPR00} implies that there are $n$-vertex triangle-free graphs $G$ such that ${\minrank}_\Fset(G) \leq O(n^{3/4})$ for every field $\Fset$ (see~\cite[Section~4.2]{BlasiakKL13} for a related construction over the binary field as well as for an application of such graphs from the area of index coding). Note that this yields, by~\eqref{eq:minrk_comp}, that $g(n,K_3,\Fset) \geq \Omega(n^{1/4})$.
In contrast, for the real field and the cycle on $4$ vertices, it was shown in~\cite{CodenottiPR00} that every $n$-vertex $C_4$-free graph $G$ satisfies ${\minrank}_\R (G) > \frac{n}{6}$.
Yet, the question whether every $n$-vertex digraph with sublinear minrank contains an alternating cycle of odd length $\ell \geq 5$ was left open in~\cite{CodenottiPR00} for every field.
Our Theorem~\ref{thm:IntroNonBi} implies that for every odd $\ell$ there are (undirected) $C_\ell$-free graphs $G$ with sublinear ${\minrank}_\R(G)$, and in particular disproves Conjecture~\ref{conj:alternating} for the real field $\R$.

\paragraph{Nearly Orthogonal Systems of Vectors.}
A system of nonzero vectors in $\R^m$ is said to be nearly orthogonal if any set of three vectors of the system contains an orthogonal pair.
It was proved by Rosenfeld~\cite{Rosenfeld91} that every such system has size at most $2m$.
An equivalent way to state this, is that every $n$-vertex graph represented by a real positive semidefinite matrix of rank smaller than $\frac{n}{2}$ contains a triangle.
Note that the positive semidefiniteness assumption is essential in this result, as follows from the aforementioned construction of~\cite{CodenottiPR00} of $n$-vertex triangle-free graphs $G$ with ${\minrank}_\R(G) \leq O(n^{3/4})$.

A related question was posed by Pudl\'ak in~\cite{Pudlak02}.
He proved there that for some $\eps >0$, every $n$-vertex graph represented by a real positive semidefinite matrix of rank at most $\eps \cdot n$ contains a cycle of length $5$. Pudl\'ak asked whether the assumption that the matrix is positive semidefinite can be omitted.
Our Theorem~\ref{thm:IntroNonBi} applied to $H = C_5$ implies that there are $C_5$-free graphs $G$ with sublinear ${\minrank}_\R(G)$, and thus answers this question in the negative.

\subsection{Outline}
The rest of the paper is organized as follows.
In Section~\ref{sec:forest} we present the simple proof of Proposition~\ref{prop:forestIntro}.
In Section~\ref{sec:g_comp} we provide some background on sparse-base matrices from~\cite{Golovnev0W17} and then prove Theorem~\ref{thm:IntroComp}.
In the final Section~\ref{sec:non-bip}, we prove Theorem~\ref{thm:IntroNonBi}.

\section{Forests}\label{sec:forest}

In this section we prove Proposition~\ref{prop:forestIntro}.
We use an argument from one of the proofs in~\cite{AlonKS05}.

\begin{proof}[ of Proposition~\ref{prop:forestIntro}]
Fix a nontrivial $h$-vertex forest $H$ and a field $\Fset$.
It suffices to consider the case where $H$ is a tree, as otherwise $H$ is a subgraph of some $h$-vertex tree $H'$, and since every $H$-free graph is also $H'$-free, we have $g(n,H,\Fset) \leq g(n,H',\Fset)$.

Our goal is to show that every $n$-vertex graph $G$ whose complement $\overline{G}$ is $H$-free satisfies ${\minrank}_\Fset(G) \leq h-1$.
Let $G$ be such a graph.
We claim that $\overline{G}$ is $(h-2)$-degenerate, that is, every subgraph of $\overline{G}$ contains a vertex of degree at most $h-2$. Indeed, otherwise $\overline{G}$ has a subgraph $G'$ all of whose degrees are at least $h-1$, and one can find a copy of $H$ in $G'$ as follows: First identify an arbitrary vertex of $G'$ with an arbitrary vertex of $H$, and then iteratively identify a vertex of $G'$ with a leaf added to the being constructed copy of the tree $H$. The process succeeds since $H$ has $h$ vertices and every vertex of $G'$ has degree at least $h-1$.
As is well known, the fact that $\overline{G}$ is $(h-2)$-degenerate implies that $\overline{G}$ is $(h-1)$-colorable, so we get that ${\minrank}_\Fset(G) \leq \chi(\overline{G}) \leq h-1$, as required.

We finally observe that the bound is tight whenever $H$ is a tree and $n \geq h-1$.
Indeed, let $G$ be the $n$-vertex complete $\lceil \frac{n}{h-1} \rceil$-partite graph, that has $h-1$ vertices in each of its parts, except possibly one of them.
Its complement $\overline{G}$ is a disjoint union of cliques, each of size at most $h-1$, and is thus $H$-free.
Since $\alpha(G) = \chi(\overline{G})=h-1$, it follows that ${\minrank}_\Fset(G) = h-1$ for every field $\Fset$, completing the proof.
\end{proof}

\section{A General Lower Bound on $g(n,H,\Fset)$}\label{sec:g_comp}

In this section we prove Theorem~\ref{thm:IntroComp} and discuss its tightness for $H=K_3$.
We start with some needed preparations.

\subsection{Lov\'{a}sz Local Lemma}

The Lov\'{a}sz Local Lemma~\cite{LLL75} stated below is a powerful probabilistic tool in Combinatorics (see, e.g.,~\cite[Chapter~5]{AlonS16}).
We denote by $[N]$ the set of integers from $1$ to $N$.

\begin{lemma}\label{lemma:lll}[Lov\'{a}sz Local Lemma~\cite{LLL75}]
Let $A_1,\ldots, A_N$ be events in an arbitrary probability space.
A digraph $D = (V,E)$ on the vertex set $V = [N]$ is called a dependency digraph for the events $A_1,\ldots, A_N$ if for every $i \in [N]$, the event $A_i$ is mutually independent of the events $A_j$ with $j \neq i$ and $(i,j) \notin E$.
Suppose that $D=(V,E)$ is a dependency digraph for the above events and suppose that there are real numbers $x_1,\ldots,x_N \in [0,1)$ such that \[\Prob{}{A_i} \leq x_i \cdot \prod_{(i,j) \in E}{(1-x_j)}\] for all $i \in [N]$.
Then, with positive probability no event $A_i$ holds.
\end{lemma}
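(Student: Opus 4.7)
The plan is to prove by induction on $|S|$ the stronger statement that for every $i \in [N]$ and every subset $S \subseteq [N] \setminus \{i\}$,
\[ \Pr\Bigl[A_i \;\Big|\; \bigcap_{j \in S} \overline{A_j}\Bigr] \leq x_i. \]
The desired conclusion will then follow immediately by the chain rule (see below). The base case $S = \emptyset$ is immediate from the hypothesis, since $\Pr[A_i] \leq x_i \cdot \prod_{(i,j) \in E}(1-x_j) \leq x_i$ because each factor $1-x_j$ lies in $(0,1]$.

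For the inductive step, I would split $S$ into $S_1 = \{j \in S : (i,j) \in E\}$, the out-neighbors of $i$ in $D$ lying in $S$, and $S_2 = S \setminus S_1$. By the definition of conditional probability,
\[ \Pr\Bigl[A_i \;\Big|\; \bigcap_{j \in S} \overline{A_j}\Bigr] = \frac{\Pr[A_i \cap \bigcap_{j \in S_1} \overline{A_j} \mid \bigcap_{j \in S_2} \overline{A_j}]}{\Pr[\bigcap_{j \in S_1} \overline{A_j} \mid \bigcap_{j \in S_2} \overline{A_j}]}. \]
The numerator is bounded above by $\Pr[A_i \mid \bigcap_{j \in S_2} \overline{A_j}]$, and since $S_2$ consists of indices $j$ with $(i,j)\notin E$, the mutual independence assumption reduces this to $\Pr[A_i] \leq x_i \cdot \prod_{(i,j) \in E}(1-x_j)$. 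The denominator is handled by enumerating $S_1 = \{j_1,\ldots,j_r\}$ and applying the chain rule to write it as $\prod_{k=1}^{r} \Pr[\overline{A_{j_k}} \mid \bigcap_{l<k}\overline{A_{j_l}} \cap \bigcap_{j \in S_2}\overline{A_j}]$; each such conditioning set has size strictly less than $|S|$ and avoids $j_k$, so the inductive hypothesis gives each factor $\geq 1-x_{j_k}$. Since $S_1 \subseteq \{j:(i,j)\in E\}$, combining the two estimates yields the desired ratio bound of $x_i$.

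Once the inductive claim is established, the proof of the lemma concludes with the chain rule:
\[ \Pr\Bigl[\bigcap_{i=1}^N \overline{A_i}\Bigr] = \prod_{i=1}^N \Pr\Bigl[\overline{A_i} \;\Big|\; \bigcap_{j<i} \overline{A_j}\Bigr] \geq \prod_{i=1}^N (1 - x_i) > 0, \]
strictly positive since each $x_i \in [0,1)$. The main obstacle is the bookkeeping for the denominator: one must verify that peeling off events of $S_1$ one at a time always produces conditioning sets of size strictly smaller than $|S|$, which is precisely what licenses the induction. Subtleties such as degenerate cases in which some $x_j = 0$ force a denominator of zero are avoided because the hypothesis $\Pr[A_i] \leq x_i \prod (1-x_j)$ then forces the relevant probabilities to be zero as well, and the inductive statement still holds vacuously.
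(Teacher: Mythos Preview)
The paper does not give its own proof of this lemma: it is stated as the Lov\'asz Local Lemma, attributed to~\cite{LLL75}, and the reader is referred to~\cite[Chapter~5]{AlonS16}. So there is no ``paper's proof'' to compare against; your write-up is precisely the standard textbook argument (induction on $|S|$, splitting into neighbors $S_1$ and non-neighbors $S_2$, bounding numerator by mutual independence and denominator via the chain rule and the inductive hypothesis), and it is correct.

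One small clean-up: your closing remark about ``$x_j=0$ forcing a denominator of zero'' is off target. If $x_j=0$ then $1-x_j=1$, which does not kill the denominator; the genuine well-definedness concern is that the conditioning event $\bigcap_{j\in S}\overline{A_j}$ has positive probability. This is exactly what the induction already yields: by the chain rule and the inductive bound, $\Pr\bigl[\bigcap_{j\in S}\overline{A_j}\bigr]\ge \prod_{j\in S}(1-x_j)>0$ since every $x_j<1$. It is worth stating this explicitly as part of the inductive claim rather than relegating it to a vague ``vacuous'' comment.
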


\subsection{Sparse-base Matrices}\label{sec:GRW}

Here we review several notions and lemmas due to Golovnev, Regev, and Weinstein~\cite{Golovnev0W17}.
For a matrix $M$ over a field $\Fset$, let $s(M)$ denote its sparsity, that is, the number of its nonzero entries.
We say that a matrix $M$ over $\Fset$ with rank $k$ contains an $\ell$-sparse column (row) basis if $M$ contains $k$ linearly independent columns (rows) with a total of at most $\ell$ nonzero entries.
We first state a lemma that provides an upper bound on the number of matrices with sparse column and row bases.

\begin{lemma}[\cite{Golovnev0W17}]\label{lemma:size_M}
The number of rank $k$ matrices in $\Fset^{n \times n}$ that contain $\ell$-sparse column and row bases is at most $(n \cdot |\Fset|)^{6\ell}$.
\end{lemma}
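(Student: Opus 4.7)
The plan is to reconstruct any such matrix $M$ from its two sparse bases and then count the possible configurations directly. Suppose $M \in \Fset^{n \times n}$ has rank $k$, and let $S_C, S_R \subseteq [n]$ with $|S_C|=|S_R|=k$ be the index sets of an $\ell$-sparse column basis and an $\ell$-sparse row basis, respectively. Since the basis columns and the basis rows each span, respectively, the column space and the row space of $M$, the $k \times k$ submatrix $M' = M[S_R,S_C]$ must be invertible. Expressing every column of $M$ as a linear combination of the basis columns in $M[:,S_C]$ and reading off the coefficients from the basis rows in $M[S_R,:]$, I would verify the reconstruction identity
\[ M \;=\; M[:,S_C] \cdot (M')^{-1} \cdot M[S_R,:]. \]
This identity shows that $M$ is uniquely determined by the pair of sparse submatrices $M[:,S_C]$ and $M[S_R,:]$ together with the sets $S_C$ and $S_R$.

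With the reconstruction in hand, the remaining step is a routine count. Since linearly independent vectors are nonzero, each basis vector contributes at least one nonzero entry, so $k \le \ell$. There are at most $\binom{n}{k} \le n^\ell$ choices for each of $S_C$ and $S_R$. An $n \times k$ matrix with at most $\ell$ nonzero entries is specified by selecting at most $\ell$ positions in $[n]\times[k]$ and their nonzero values in $\Fset$, contributing at most $(nk)^\ell \cdot |\Fset|^\ell \le n^{2\ell} \cdot |\Fset|^\ell$ possibilities, and the same bound applies to $M[S_R,:]$. Multiplying these four factors together yields
\[ n^\ell \cdot n^\ell \cdot \bigl(n^{2\ell}\,|\Fset|^\ell\bigr)^{2} \;=\; n^{6\ell} \cdot |\Fset|^{2\ell} \;\le\; (n \cdot |\Fset|)^{6\ell}, \]
which is the desired bound.

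No genuine obstacle is expected: the only step requiring care is the reconstruction identity, a standard fact about rank-$k$ matrices whose verification reduces to checking it on the basis row and column indices, and the rest is elementary counting of sparse patterns. If a sharper constant were needed one could tighten the bounds on $\binom{n}{k}$ and on the number of sparse patterns, but for the purposes of Theorem~\ref{thm:IntroComp} the stated estimate is already sufficient.
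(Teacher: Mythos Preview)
Your proof is correct and follows essentially the same argument as in~\cite{Golovnev0W17}, which the present paper merely cites without reproducing a proof: reconstruct $M$ from its sparse column and row bases via the rank-$k$ factorization $M = M[:,S_C]\,(M[S_R,S_C])^{-1}\,M[S_R,:]$, then count the possible choices of $S_C$, $S_R$, and the two sparse submatrices. The invertibility of $M[S_R,S_C]$ and the ensuing reconstruction identity are exactly the standard ingredients, and your counting is clean.
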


The following lemma relates the sparsity of a matrix with nonzero entries on the main diagonal to its rank.

\begin{lemma}[\cite{Golovnev0W17}]\label{lemma:sparsity_M}
For every rank $k$ matrix $M \in \Fset^{n \times n}$ with nonzero entries on the main diagonal,
\[s(M) \geq \frac{n^2}{4k}.\]
\end{lemma}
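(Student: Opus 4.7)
The plan is to use a rank-$k$ outer-product factorization of $M$ combined with a Cauchy--Schwarz style counting argument. I would begin by writing $M = UV$, where $U \in \Fset^{n \times k}$ and $V \in \Fset^{k \times n}$ both have rank $k$. Let $u_i \in \Fset^k$ denote the $i$-th row of $U$ and $v_j \in \Fset^k$ the $j$-th column of $V$, so that $M_{i,j} = \langle u_i, v_j\rangle$ and the nonzero-diagonal hypothesis becomes $\langle u_i, v_i\rangle \neq 0$ for every $i \in [n]$. In this formulation, $s(M)$ is exactly the number of non-orthogonal pairs $(u_i, v_j)$.

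Next, I would group the indices according to the value of $u_i$. For each $u \in \Fset^k$, set $I_u = \{i : u_i = u\}$. For $i \in I_u$ the row-sparsity of row $i$ depends only on $u$, equalling $r(u) := |\{j : \langle u, v_j \rangle \neq 0\}|$, and the nonzero-diagonal constraint $\langle u, v_i \rangle \neq 0$ forces $v_i$ to lie among the $r(u)$ columns $v_j$ that are non-orthogonal to $u$. A simple pigeonhole then yields $|I_u| \leq r(u)$, and hence
\[
s(M) \;=\; \sum_{u} |I_u|\, r(u) \;\geq\; \sum_{u} |I_u|^2 \;\geq\; \frac{n^2}{D},
\]
where $D$ is the number of distinct vectors among $u_1, \ldots, u_n$ (Cauchy--Schwarz, using $\sum_u |I_u| = n$). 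A symmetric grouping of columns gives an analogous bound.

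The hard part is converting this into the sharp bound $s(M) \geq n^2/(4k)$: a priori, $D$ could be as large as $n$, which would render the inequality above trivial. To close this gap I would argue by contradiction, assuming $s(M) < n^2/(4k)$, and apply a Markov-style reduction: more than half of the rows must then have sparsity strictly less than $n/(2k)$, and likewise for columns, so restricting $M$ to the principal submatrix indexed by these ``light'' rows preserves the rank and nonzero-diagonal hypotheses while forcing every row (and column) sparsity in the submatrix to be at most $n/(2k)$. Combining this restriction with the grouping bound above and iterating the argument on the smaller submatrix should force a contradiction. The delicate bottleneck I foresee is controlling the loss of constants throughout this iteration so as to reach $n^2/(4k)$ rather than a weaker $n^2/(Ck)$ for some larger $C$; I expect that resolving this will require using the row and column partitions simultaneously, rather than separately, inside the light submatrix.
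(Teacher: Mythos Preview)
The paper does not actually prove this lemma; it is quoted from~\cite{Golovnev0W17} without proof. Assessing your argument on its own terms: the grouping by distinct values of $u_i$ yields $s(M)\ge n^2/D$ where $D$ is the number of distinct rows of $U$, but $D$ is not controlled by $k$ in any useful way (over an infinite field one can have $D=n$ even when $k=1$), so this inequality is a dead end rather than a starting point. The Markov step you sketch afterwards --- passing to the $\ge n/2$ rows of sparsity below $2s(M)/n$ --- is indeed the correct opening move, but iterating it on a principal submatrix does not help: the rank does not drop, the ratio $n^2/(4k)$ you are chasing does not improve under restriction, and you yourself note you cannot see how to close the constants.

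The idea that is missing, and that replaces both the grouping and the iteration, is a single covering step. Once you have a set $T$ of at least $n/2$ ``light'' rows (each with at most $2s(M)/n$ nonzeros), use the rank hypothesis directly on the row-submatrix $M_T$: choose at most $k$ rows inside $T$ that span its row space. For every $i\in T$, the condition $M_{i,i}\neq 0$ forces the column index $i$ to lie in the support of at least one of these $\le k$ basis rows (otherwise the $i$th coordinate of every linear combination would vanish). Hence $|T|$ is bounded by the total support of the basis rows, giving
\[
\frac{n}{2}\;\le\;|T|\;\le\;k\cdot\frac{2s(M)}{n},
\]
which is exactly $s(M)\ge n^2/(4k)$, with the right constant and no iteration.
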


We also need the following notion. An {\em $(n,k,s,\ell)$-matrix} over a field $\Fset$ is a matrix in $\Fset^{n \times n}$ of rank $k$ and sparsity $s$ that contains $\ell$-sparse column and row bases and has nonzero entries on the main diagonal. Note that by Lemma~\ref{lemma:sparsity_M}, an $(n,k,s,\ell)$-matrix exists only if $s \geq \frac{n^2}{4k}$.
For integers $n,k,s'$ and a field $\Fset$ (which will always be clear from the context), let $\calM_{n,k}^{(s')}$ be the collection that consists of all $(n',k',s',\frac{2s'k'}{n'})$-matrices over $\Fset$ for all $n' \in [n]$ and $k' \in [k]$ such that $\frac{k'}{n'} \leq \frac{k}{n}$.
This collection is motivated by the following lemma.

\begin{lemma}[\cite{Golovnev0W17}]\label{lemma:M->M'}
Every matrix in $\Fset^{n \times n}$ with rank at most $k$ and nonzero entries on the main diagonal has a principal sub-matrix that lies in $\calM_{n,k}^{(s')}$ for some $s'$.
\end{lemma}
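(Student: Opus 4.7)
The plan is to construct $M'$ by iteratively shrinking a subset $I \subseteq [n]$, maintaining throughout the invariants that $M[I,I]$ has nonzero diagonal and satisfies $\rank(M[I,I])/|I| \leq k/n$. Initialize with $I = [n]$; both invariants hold. Writing $n' = |I|$, $k' = \rank(M[I,I])$, and $s' = s(M[I,I])$, each iteration first tests whether $M[I,I]$ admits both a column basis and a row basis of total sparsity at most $\frac{2s'k'}{n'}$. If yes, output $M' = M[I,I]$ and halt. Otherwise, perform a removal step and iterate.

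For the removal, observe first that the failure of the sparse column basis property forces the existence of a column of $M[I,I]$ whose sparsity strictly exceeds $2s'/n'$: if every column had sparsity at most $2s'/n'$, then any $k'$ linearly independent columns would provide a $\frac{2s'k'}{n'}$-sparse basis, contradicting the failure. A symmetric statement holds for rows. Given this, the plan is to pick an index $i \in I$ whose removal drops $\rank(M[I \setminus \{i\}, I \setminus \{i\}])$ strictly below $k'$. Such a rank-dropping removal automatically preserves the ratio invariant, since from $k'/n' \leq k/n$ and $k \leq n$ one derives $(k'-1)/(n'-1) \leq k/n$. When the invariant has strict slack, a non-rank-dropping removal is also admissible.

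Termination is immediate since $|I|$ strictly decreases at every iteration, and when $|I| = 1$ the principal submatrix is a single nonzero scalar, which trivially satisfies the sparse basis property with $\ell = 2$. The main obstacle I anticipate is the structural step: demonstrating that whenever the sparse basis property fails, a suitable index $i$ (either rank-dropping, or rank-preserving with invariant slack) can always be found. The most delicate subcase is when the ratio $k'/n'$ is tight at $k/n$ and no single-index removal drops the rank of $M[I,I]$ — a rigidity condition on the current principal submatrix. To handle this case, I plan to consider simultaneous removal of two indices (which can drop the rank by two while shrinking the size by two, preserving the invariant), and to argue that the coexistence of this rigidity with the assumed existence of a dense column or row and the nonzero-diagonal hypothesis leads to a contradiction, thereby forcing the procedure to terminate correctly with an $M'$ in $\calM_{n,k}^{(s')}$ for $s' = s(M')$.
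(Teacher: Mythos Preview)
The paper does not prove this lemma; it is quoted from \cite{Golovnev0W17} and used as a black box, so there is no in-paper argument to compare your proposal against.

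As a standalone argument, your outline has a real gap exactly where you flag it. You correctly deduce that failure of the $\tfrac{2s'k'}{n'}$-sparse basis condition forces some column (or row) of sparsity exceeding $2s'/n'$. But nothing you wrote links that dense index to a \emph{rank-dropping} removal, and your two fallback ideas remain hopes rather than arguments: you do not show that a pair of indices whose joint removal drops the rank by two must exist in the tight case, nor do you exhibit the promised contradiction from ``rigidity plus a dense column plus nonzero diagonal.'' The rigidity hypothesis --- that every $(n'-1)\times(n'-1)$ principal submatrix of $M[I,I]$ still has rank $k'$ --- is not obviously incompatible with failure of the sparse-basis property, and the nonzero-diagonal assumption does not visibly intervene. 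Note also that your $|I|=1$ base case gives $k'/n'=1$, which violates the invariant whenever $k<n$; so in the interesting range it cannot serve as a termination witness, and the entire argument rests on the unproven claim that a legal removal is always available until the sparse-basis test succeeds.
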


Now, for integers $n,k,s'$, let $\calP_{n,k}^{(s')}$ be the collection that consists of all pairs $(M,R)$ such that, for some $n' \in [n]$, $M$ is an $n' \times n'$ matrix in $\calM_{n,k}^{(s')}$ and $R$ is an $n'$-subset of $[n]$.
Observe that Lemma~\ref{lemma:M->M'} implies that for every digraph $G$ on the vertex set $[n]$ with ${\minrank}_\Fset(G) \leq k$ there exist $s'$ and a pair $(M,R)$ in $\calP_{n,k}^{(s')}$ such that $M$ represents the induced subgraph $G[R]$ of $G$ on $R$, with respect to the natural order of the vertices in $R$ (from smallest to largest).

The following lemma provides an upper bound on the size of $\calP_{n,k}^{(s')}$.

\begin{lemma}\label{lemma:size_P}
For every integers $n,k,s'$, $|\calP_{n,k}^{(s')}| \leq (n \cdot |\Fset|)^{24s'k/n}$.
\end{lemma}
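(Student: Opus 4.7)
The plan is to enumerate pairs $(M,R) \in \calP_{n,k}^{(s')}$ by first fixing the size $n' \in [n]$ and the rank $k' \in [k]$ of $M$ (subject to $k'/n' \leq k/n$), then counting matrices and subsets separately. For a fixed valid $(n',k')$, Lemma~\ref{lemma:size_M} bounds the number of $n' \times n'$ matrices in $\calM_{n,k}^{(s')}$ (which have rank $k'$ and $(2s'k'/n')$-sparse column and row bases) by $(n' |\Fset|)^{12 s'k'/n'}$, while the number of $n'$-subsets of $[n]$ is at most $\binom{n}{n'} \leq n^{n'}$.

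Setting $T = s'k/n$, I would then establish two structural estimates. Since matrices in $\calM_{n,k}^{(s')}$ have nonzero diagonals and $k' \leq k n'/n$, Lemma~\ref{lemma:sparsity_M} yields
\[ s' \,\geq\, \frac{(n')^2}{4k'} \,\geq\, \frac{n' n}{4k}, \]
so $n' \leq 4T$, and moreover $T \geq n'/4 \geq 1/4$ whenever $\calP_{n,k}^{(s')}$ is nonempty. The hypothesis $k'/n' \leq k/n$ also directly gives $s'k'/n' \leq T$. Combining these with the per-$(n',k')$ count and using $n' \leq n$ inside the base,
\[ n^{n'} \cdot (n' |\Fset|)^{12 s'k'/n'} \,\leq\, (n |\Fset|)^{n' + 12 s'k'/n'} \,\leq\, (n |\Fset|)^{16 T}. \]

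Finally, summing over the at most $nk \leq n^2$ valid pairs $(n',k')$ and invoking $T \geq 1/4$ to get $(n |\Fset|)^{8T} \geq (n |\Fset|)^{2} \geq n^2$, the polynomial prefactor is absorbed into the exponent, yielding
\[ |\calP_{n,k}^{(s')}| \,\leq\, n^2 \cdot (n |\Fset|)^{16T} \,\leq\, (n |\Fset|)^{24T} \,=\, (n |\Fset|)^{24 s'k/n}, \]
with the empty case trivial. The main obstacle I anticipate is bookkeeping: squeezing both the binomial $\binom{n}{n'}$ and the summation over $(n',k')$ inside the constant $24$ requires using the hypothesis $k'/n' \leq k/n$ twice (once to lower-bound $s'$ and thereby $T$, once to upper-bound $s'k'/n'$ by $T$), and if either is skipped the constant blows up.
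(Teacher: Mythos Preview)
Your proposal is correct and follows essentially the same route as the paper: enumerate over $(n',k')$, apply Lemma~\ref{lemma:size_M} with $\ell = 2s'k'/n'$ and the trivial bound $\binom{n}{n'}\le n^{n'}$, then use Lemma~\ref{lemma:sparsity_M} together with $k'/n'\le k/n$ to absorb everything into the exponent $24s'k/n$. The only cosmetic difference is that the paper absorbs the $n^2$ prefactor into $n^{3n'}$ first and uses the inequality $n' \le 4s'k'/n'$ (which follows from Lemma~\ref{lemma:sparsity_M} alone, without invoking $k'/n'\le k/n$), so in fact the ratio hypothesis need only be used once rather than twice; your remark that skipping one use would blow up the constant is slightly pessimistic, but this does not affect the validity of your argument.
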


\begin{proof}
To bound the size of $\calP_{n,k}^{(s')}$, we consider for every $n' \in [n]$ and $k' \in [k]$ such that $\frac{k'}{n'} \leq \frac{k}{n}$ the pairs $(M,R)$ where $M$ is an $(n',k',s',\frac{2s'k'}{n'})$-matrix and $R$ is an $n'$-subset of $[n]$.
By Lemma~\ref{lemma:size_M} there are at most $(n' \cdot |\Fset|)^{12s'k'/n'}$ such matrices $M$, each of which occurs in $n \choose {n'}$ pairs of $\calP_{n,k}^{(s')}$. It follows that
\begin{eqnarray*}
|\calP_{n,k}^{(s')}| & \leq & \sum_{n',k'}{ {n \choose n'} \cdot (n' \cdot |\Fset|)^{12s'k'/n'}}
 \leq  n^2 \cdot \max_{n',k'} \big ( n^{n'} \cdot (n' \cdot |\Fset|)^{12s'k'/n'} \big ) \\
& \leq & \max_{n',k'} \big ( n^{3n'} \cdot (n' \cdot |\Fset|)^{12s'k'/n'} \big )
 \leq \max_{n',k'} \big ( (n \cdot |\Fset|)^{3n'+12s'k'/n'} \big )\\
& \leq & \max_{n',k'} \big ( (n \cdot |\Fset|)^{12s'k'/n' +12s'k'/n'} \big )
\leq  (n \cdot |\Fset|)^{24s'k/n},
\end{eqnarray*}
where in the fifth inequality we have used the relation $s' \geq \frac{n'^2}{4k'}$ from Lemma~\ref{lemma:sparsity_M}, and in the sixth we have used $\frac{k'}{n'} \leq \frac{k}{n}$.
\end{proof}

\subsection{Proof of Theorem~\ref{thm:IntroComp}}

We prove the following theorem and then derive Theorem~\ref{thm:IntroComp}.
Recall that for a graph $H$ with $h \geq 3$ vertices and $f \geq 3$ edges, we denote $\gamma(H) = \frac{h-2}{f-1}$.
We also let $\exp(x)$ stand for $e^x$.

\begin{theorem}\label{thm:Comp}
For every graph $H$ with at least $3$ edges there exists $c=c(H)>0$ such that for every integer $n$ and a finite field $\Fset$,
\[g(n,H,\Fset) \geq c \cdot \frac{n^{1-\gamma(H)}}{\log (n \cdot |\Fset|)} .\]
\end{theorem}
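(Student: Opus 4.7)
The plan is to follow a probabilistic argument based on the Lov\'asz Local Lemma, combining Spencer's approach to off-diagonal Ramsey bounds~\cite{Spencer77} with the sparse-base matrix machinery of Section~\ref{sec:GRW}. Consider the random graph $G$ on vertex set $[n]$ in which each pair of distinct vertices is a non-edge independently with probability $q$, so that $\overline{G}$ is distributed as $G(n,q)$. I will set $q = c_1 \cdot n^{-\gamma(H)}$ and $k = c \cdot n^{1-\gamma(H)}/\log(n \cdot |\Fset|)$ for sufficiently small constants $c_1,c>0$ depending only on $H$, and aim to show that with positive probability $\overline{G}$ is $H$-free while simultaneously $\minrank_\Fset(G) > k$.

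Two families of bad events drive the argument. For each injection $\sigma : V(H) \to [n]$, the event $A_\sigma$ asserts that $\sigma$ realizes $H$ as a subgraph of $\overline{G}$, so $\Prob{}{A_\sigma} \leq q^f$. By Lemma~\ref{lemma:M->M'}, whenever $\minrank_\Fset(G) \leq k$ there exists a pair $(M,R) \in \calP_{n,k}^{(s')}$ for which $M$ represents $G[R]$; for each such pair let $B^{(s')}_{M,R}$ be the event that this happens. Since every off-diagonal nonzero entry of $M$ forces a specific undirected edge to be present in $G$, and an undirected edge contributes at most two such entries, one obtains $\Prob{}{B^{(s')}_{M,R}} \leq (1-q)^{(s'-|R|)/2} \leq \exp(-q(s'-|R|)/2)$. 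Avoiding every $A$- and $B$-event simultaneously yields an $n$-vertex graph $G$ whose complement is $H$-free and whose minrank exceeds $k$.

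To apply Lemma~\ref{lemma:lll}, I declare two events dependent whenever their edge supports intersect, assign every $A_\sigma$ the weight $x := 2q^f$, and assign every $B^{(s')}_{M,R}$ the weight $y := 2\exp(-q(s'-|R|)/2)$. The LLL conditions reduce to showing that the total weighted sum over the neighbors of each event is bounded by a small absolute constant. The choice $q \sim n^{-\gamma(H)}$ is dictated, Spencer-style, by the $A$--$A$ interaction, since $q^f \cdot n^{h-2}$ matches $q$ up to a constant factor. The choice of $k$ is dictated by the $A$--$B$ and $B$--$B$ interactions: using Lemmas~\ref{lemma:size_M} and~\ref{lemma:size_P} to count $B$-events at each sparsity level and weighting by $y$ yields a product of the shape $(n \cdot |\Fset|)^{O(s'k/n)} \cdot \exp(-qs'/2)$, which is exponentially small in $s'$ precisely when $k$ is at most a constant multiple of $qn/\log(n \cdot |\Fset|) = n^{1-\gamma(H)}/\log(n \cdot |\Fset|)$.

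The hardest part will be the careful bookkeeping of cross-interactions between events of different types and different sparsity parameters. A single edge can lie in exponentially many $B$-events, so the contributions to each LLL product must be summed over all triples $(n',k',s')$ with $n' \leq n$, $k' \leq kn'/n$, and $s' \geq n'^2/(4k')$ from Lemma~\ref{lemma:sparsity_M}. The last inequality provides the quadratic growth of $s'$ in $n'$ which, combined with the choice of $k$, furnishes enough exponential damping in both parameters for the relevant geometric series to converge to a constant. Once these bounds are verified and the constants $c_1$ and $c$ are tuned appropriately, Lemma~\ref{lemma:lll} guarantees that with positive probability no bad event holds, completing the proof.
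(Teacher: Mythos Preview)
Your overall architecture matches the paper's: random graph, $A$-events for copies of $H$ in $\overline{G}$, $B$-events for pairs $(M,R)\in\calP_{n,k}^{(s')}$ representing $G[R]$, and the Lov\'asz Local Lemma to avoid all of them. Your choice of $q\sim n^{-\gamma(H)}$ and $k\sim n^{1-\gamma(H)}/\log(n|\Fset|)$ is exactly right, and your use of an undirected random graph (instead of the paper's digraph trick) costs only a harmless factor of $2$ in the exponent of $\Pr[B]$.

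There is, however, a genuine gap in your LLL weight assignment. You set $y=2\exp(-q(s'-|R|)/2)\approx 2\Pr[B]$ and then claim that ``the LLL conditions reduce to showing that the total weighted sum over the neighbors of each event is bounded by a small absolute constant.'' This is false for $B$-events. A $B$-event of sparsity $s'$ has roughly $s'\cdot n^{h-2}$ many $A$-neighbors, each carrying weight $x\approx 2q^f$, so the relevant sum is of order $q^f\cdot s'\cdot n^{h-2}=\Theta(s'\cdot n^{-\gamma})$, which grows without bound as $s'$ ranges up to $n^2$. With $y=2\Pr[B]$ the LLL inequality for a $B$-event becomes $\prod_{j\sim B}(1-x_j)\geq 1/2$, and this simply fails once $s'$ is large.

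The fix is to give the $B$-weights extra slack: set $y_{s'}=\exp(-c_4\,s'\,n^{-\gamma})$ with a rate $c_4$ \emph{strictly smaller} than the rate $c_2/2$ appearing in $\Pr[B]\leq\exp(-(c_2/2)\,s'\,n^{-\gamma})$ (here $q=c_2 n^{-\gamma}$). Then $\Pr[B]/y_{s'}=\exp(-(c_2/2-c_4)\,s'\,n^{-\gamma})$ decays in $s'$ and can absorb the $A$-neighbor product $\exp(-\Theta(s'\,n^{-\gamma}))$. Concretely one needs $c_2>2(2c_3+c_4)$ where $x=c_3 n^{-\gamma f}$, together with $c_4\geq 32c$ (your $c$) so that $\sum_{s'}y_{s'}N_{s'}\leq 1$ via Lemma~\ref{lemma:size_P}. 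With these constraints the constants can be chosen (using $f\geq 3$), and the LLL goes through. The paper carries out exactly this balancing; your sketch is on the right track but the weight $y=2\Pr[B]$ and the ``bounded neighbor sum'' heuristic must be replaced by this more careful choice.
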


\begin{proof}
Fix a graph $H$ with $h \geq 3$ vertices and $f \geq 3$ edges and denote $\gamma = \gamma(H) = \frac{h-2}{f-1} > 0$.
The proof is via the probabilistic method. Let $\vec{G} \sim \vec{G}(n,p)$ be a random digraph on the vertex set $[n]$ where each directed edge is taken randomly and independently with probability $p$. Set $q=1-p$.
Let $G$ be the (undirected) graph on $[n]$ in which two distinct vertices $i,j$ are adjacent if both the directed edges $(i,j)$ and $(j,i)$ are included in $\vec{G}$. Notice that every two distinct vertices are adjacent in $G$ with probability $p^2$ independently of the adjacencies between other vertex pairs.

To prove the theorem, we will show that for a certain choice of $p$ the random graph $G$ satisfies with positive probability that its complement $\overline{G}$ is $H$-free and that ${\minrank}_{\Fset}(G) > k$, where
\begin{eqnarray}\label{eq:k}
k = c_1 \cdot \frac{n^{1-\gamma}}{\ln{(n \cdot |\Fset|)}}
\end{eqnarray}
for a constant $c_1>0$ that depends only on $H$.
To do so, we define two families of events as follows.

First, for every set $I \subseteq [n]$ of size $|I|=h$, let $A_I$ be the event that the induced subgraph of $\overline{G}$ on $I$ contains a copy of $H$. Observe that
\[\Prob{}{A_I} \leq h! \cdot (1-p^2)^f = h! \cdot (1-(1-q)^2)^f \leq h! \cdot (2q)^f.\]

Second, consider the collection $\calP = \cup_{s' \in [n^2]}{\calP_{n,k}^{(s')}}$ (see Section~\ref{sec:GRW}).
Recall that every element of $\calP$ is a pair $(M,R)$ such that, for some $n' \in [n]$, $M$ is an $n' \times n'$ matrix over $\Fset$ and $R$ is an $n'$-subset of $[n]$.
Denote $N_{s'} = |\calP_{n,k}^{(s')}|$.
By Lemma~\ref{lemma:size_P}, combined with~\eqref{eq:k}, we have
\begin{eqnarray}\label{eq:N_s'}
N_{s'} \leq (n \cdot |\Fset|)^{24s'k/n} = \exp(24c_1 \cdot s' \cdot n^{-\gamma}).
\end{eqnarray}
Let $\calS = \{s' \in [n^2] \mid N_{s'} \geq 1\}$.
By Lemma~\ref{lemma:sparsity_M}, for every $s' \in \calS$ and an $n' \times n'$ matrix of rank $k'$ in $\calM_{n,k}^{(s')}$ where $n' \in [n]$, $k' \in [k]$, and $\frac{k'}{n'} \leq \frac{k}{n}$, we have that
\begin{eqnarray}\label{eq:s'}
s' \geq \frac{n'}{4} \cdot \frac{n'}{k'} \geq \frac{n'}{4} \cdot \frac{n}{k} = n' \cdot \frac{n^{\gamma} \cdot \ln (n \cdot |\Fset|)}{4c_1}. \end{eqnarray}
Now, for every pair $(M,R) \in \calP$, let $B_{M,R}$ be the event that the matrix $M$ represents over $\Fset$ the induced subgraph $\vec{G}[R]$ of $\vec{G}$ on $R$ with respect to the natural order of the vertices in $R$.
For $M$ to represent $\vec{G}[R]$ we require that for every distinct $i, j$ such that $M_{i,j} \neq 0$, there is an edge in $\vec{G}$ from the $i$th to the $j$th vertex of $R$. Hence, for $M \in \Fset^{n' \times n'}$ of sparsity $s'$ and an $n'$-subset $R$ of $[n]$,
\[\Prob{}{B_{M,R}} = p^{s'-n'} \leq p^{s'/2} = (1-q)^{s'/2} \leq \exp(-qs'/2),\] where for the first inequality we have used the inequality $s' \geq 2n'$ which follows from~\eqref{eq:s'} for every sufficiently large $n$.

We claim that it suffices to prove that with positive probability none of the events $A_I$ and $B_{M,R}$ holds.
Indeed, this implies that there exists an $n$-vertex digraph $\vec{G}$ that does not satisfy any of these events.
Since the $A_I$'s are not satisfied it immediately follows that the complement $\overline{G}$ of the (undirected) graph $G$ associated with $\vec{G}$ is $H$-free.
We further claim that ${\minrank}_\Fset (G) > k$.
To see this, assume by contradiction that there exists a matrix $M \in \Fset^{n \times n}$ of rank at most $k$ that represents $G$, and thus, in particular, represents $\vec{G}$.
By Lemma~\ref{lemma:M->M'}, such an $M$ has a principal $n' \times n'$ sub-matrix $M' \in \calM_{n,k}^{(s')}$ for some $n'$ and $s'$.
Hence, for some $n'$-subset $R$ of $[n]$, the matrix $M'$ represents $\vec{G}[R]$ with respect to the natural order of the vertices in $R$, in contradiction to the fact that the event $B_{M',R}$ with $(M',R) \in \calP$ does not hold.

To prove that with positive probability none of the events $A_I$ and $B_{M,R}$ holds, we apply the Lov\'{a}sz Local Lemma (Lemma~\ref{lemma:lll}).
To this end, construct a (symmetric) dependency digraph $D=(V,E)$ whose vertices represent all the events $A_I$ and $B_{M,R}$, and whose edges are defined as follows.
\begin{itemize}
  \item An $A_I$-vertex and an $A_{I'}$-vertex are joined by edges (in both directions) if $|I \cap I'| \geq 2$. Notice that the events $A_I$ and $A_{I'}$ are independent when $|I \cap I'| < 2$.
  \item An $A_I$-vertex and a $B_{M,R}$-vertex are joined by edges if there are distinct $i,j \in I \cap R$ for which the entry of $M$ that corresponds to the edge $(i,j)$ is nonzero. Notice that the events $A_I$ and $B_{M,R}$ are independent when such $i$ and $j$ do not exist.
  \item Every two distinct $B_{M,R}$-vertices are joined by edges.
\end{itemize}
Clearly, each event is mutually independent of all other events besides those adjacent to it in $D$, and thus $D$ is a dependency digraph for our events.
Observe that every $A_I$-vertex is adjacent to at most ${h \choose 2} \cdot {n \choose {h-2}} \leq {h \choose 2} \cdot n^{h-2}$ $A_{I'}$-vertices.
Additionally, every $B_{M,R}$-vertex, where $M$ is an $n' \times n'$ matrix of sparsity $s'$, is adjacent to at most $(s'-n') \cdot {n \choose {h-2}} < s' \cdot n^{h-2}$ $A_{I}$-vertices.
Finally, every vertex of $D$ is adjacent to at most $N_{s'}$ $B_{M,R}$-vertices with $M \in \calM_{n,k}^{(s')}$ (that is, $s(M) = s'$).

To apply Lemma~\ref{lemma:lll} we assign a number in $[0,1)$ to each vertex of $D$.
Define
\[ q = c_2 \cdot n^{-\gamma},~~~~x = c_3 \cdot n^{-\gamma \cdot f},~~~~\mbox{and}~~~~x_{s'} = \exp(-c_4 \cdot s' \cdot n^{-\gamma})~~~~\mbox{for every $s' \in \calS$},\]
where $c_2,c_3,c_4>0$ are constants, depending only on $H$, to be determined.
We assign the number $x$ to every $A_I$-vertex, and the number $x_{s'}$ to every $B_{M,R}$-vertex with $s(M)=s'$.
We present now the conditions of Lemma~\ref{lemma:lll}.
For every $A_I$-vertex, recalling that $\Prob{}{A_I} \leq h! \cdot (2q)^f$, we require
\begin{eqnarray}\label{eq:lll_A}
h! \cdot (2q)^f \leq x \cdot (1-x)^{{h \choose 2} \cdot n^{h-2}} \cdot \prod_{s' \in \calS}{(1-x_{s'})^{N_{s'}}}.
\end{eqnarray}
Similarly, for every $B_{M,R}$-vertex with $s(M)=s'$, recalling that $\Prob{}{B_{M,R}} \leq \exp(-qs'/2)$, we require
\begin{eqnarray}\label{eq:lll_B}
\exp(-qs'/2) \leq x_{s'} \cdot (1-x)^{s' \cdot n^{h-2}} \cdot \prod_{s' \in \calS}{(1-x_{s'})^{N_{s'}}}.
\end{eqnarray}

To complete the proof, it suffices to show that the constants $c_1,c_2,c_3,c_4>0$ can be chosen in a way that satisfies the inequalities~\eqref{eq:lll_A} and~\eqref{eq:lll_B}. Consider the following three constraints:
\begin{enumerate}
  \item\label{itm:1} $c_2 > 2 \cdot (2c_3+c_4)$,
  \item\label{itm:2} $c_3 \geq h! \cdot (2c_2)^f \cdot \exp(3)$, and
  \item\label{itm:3} $c_4 \geq 32 \cdot c_1$.
\end{enumerate}
It is easy to see that it is possible to choose the constants under the above constraints. Indeed, by $f \geq 3$, for a sufficiently small choice of $c_2>0$ one can take $c_3$ with, say, an equality in Item~\ref{itm:2} so that some $c_4>0$ satisfies Item~\ref{itm:1}. Then, $c_1$ can be chosen as a positive constant satisfying Item~\ref{itm:3}.
We show now that such a choice satisfies~\eqref{eq:lll_A} and~\eqref{eq:lll_B} for every sufficiently large $n$. Note that we use below several times the inequality $1-\alpha \geq \exp(-2\alpha)$, which holds for any $\alpha \in [0,1/2]$.

First, use~\eqref{eq:N_s'} and the condition $c_4 \geq 32 \cdot c_1$ to obtain that
\[ \sum_{s' \in \calS}{x_{s'} \cdot N_{s'}} \leq \sum_{s' \in \calS}{\exp((24c_1-c_4) \cdot s' \cdot n^{-\gamma})} \leq \sum_{s' \in \calS}{\exp(-8c_1\cdot s' \cdot n^{-\gamma})} \leq
\sum_{s' \in \calS}{\exp(-2\ln n)} \leq 1, \]
where the third inequality follows by $s' \geq \frac{ n^{\gamma} \cdot \ln (n \cdot |\Fset|)}{4c_1}$ which we get from~\eqref{eq:s'}, and the fourth by $| \calS| \leq n^2$.
Considering the term $\prod_{s' \in \calS}{(1-x_{s'})^{N_{s'}}}$, which appears in both~\eqref{eq:lll_A} and~\eqref{eq:lll_B},
we derive that
\[ \prod_{s' \in \calS}{(1-x_{s'})^{N_{s'}}} \geq \prod_{s' \in \calS}{\exp(-2x_{s'} \cdot N_{s'})} = \exp \Big (-2 \cdot \sum_{s' \in \calS}{x_{s'} \cdot N_{s'}} \Big ) \geq \exp(-2).\]
For inequality~\eqref{eq:lll_A}, observe that
\begin{eqnarray*}
x \cdot (1-x)^{{h \choose 2} \cdot n^{h-2}} \cdot \prod_{s' \in \calS}{(1-x_{s'})^{N_{s'}}}
& \geq & x \cdot \exp \Big ( -2x \cdot {h \choose 2} \cdot n^{h-2} \Big ) \cdot \exp(-2) \\
& = & c_3 \cdot n^{-\gamma \cdot f} \cdot \exp \Big (-2c_3 \cdot n^{-\gamma \cdot f} \cdot {h \choose 2} \cdot n^{h-2} -2 \Big) \\
& \geq & h! \cdot (2c_2)^f \cdot n^{-\gamma \cdot f} \cdot \exp \Big(1-2c_3 \cdot {h \choose 2} \cdot n^{-\gamma} \Big) \\
& \geq & h! \cdot (2q)^f,
\end{eqnarray*}
where for the second inequality we use $c_3 \geq h! \cdot (2c_2)^f \cdot \exp(3)$ and $\gamma = \frac{h-2}{f-1}$,
and for the third we use the assumption that $n$ is sufficiently large.
For inequality~\eqref{eq:lll_B}, observe that
\begin{eqnarray*}
x_{s'} \cdot (1-x)^{s' \cdot n^{h-2}} \cdot \prod_{s' \in \calS}{(1-x_{s'})^{N_{s'}}}
& \geq & x_{s'} \cdot \exp (-2x \cdot s' \cdot n^{h-2} ) \cdot \exp (-2) \\
& = & \exp(-c_4 \cdot s' \cdot n^{-\gamma}) \cdot \exp (-2 c_3 \cdot n^{-\gamma \cdot f} \cdot s' \cdot n^{h-2} ) \cdot \exp (-2) \\
& = & \exp ( -(2c_3+c_4) \cdot s' \cdot n^{-\gamma} -2) \\
& \geq & \exp (-(c_2/2) \cdot s' \cdot n^{-\gamma}) \\
& = & \exp (-qs'/2),
\end{eqnarray*}
where for the second equality we again use the definition of $\gamma$, and for the second inequality we use the condition $c_2 > 2 \cdot (2c_3+c_4)$, the fact that $s' \cdot n^{-\gamma} = \omega(1)$ by~\eqref{eq:s'}, and the assumption that $n$ is sufficiently large. This completes the proof.
\end{proof}

We can derive now Theorem~\ref{thm:IntroComp}. Recall that $\gamma_0(H) = \min_{H'}{\gamma(H')}$, where the minimum is over all subgraphs $H'$ of $H$ with at least $3$ edges.

\begin{proof}[ of Theorem~\ref{thm:IntroComp}]
For a graph $H$ with $h \geq 3$ vertices and $f \geq 3$ edges, let $H'$ be a subgraph of $H$ with at least $3$ edges such that $\gamma_0(H) = \gamma(H')$.
By Theorem~\ref{thm:Comp} there exists $c>0$ such that
\[g(n,H',\Fset) \geq c \cdot \frac{n^{1-\gamma_0(H)}}{\log (n \cdot |\Fset|)}\]
for every integer $n$ and a finite field $\Fset$. Since every $H'$-free graph is also $H$-free, it follows that $g(n,H,\Fset) \geq g(n,H',\Fset)$ and we are done.
\end{proof}

\subsection{The Minrank of Graphs with Small Independence Number}

For an integer $t \geq 3$, $g(n,K_t,\Fset)$ is the maximum possible minrank over $\Fset$ of an $n$-vertex graph with independence number smaller than $t$. For this case we derive the following corollary.
\begin{corollary}\label{cor:K_t}
For every $t \geq 3$ there exists $c=c(t)>0$ such that for every integer $n$ and a finite field $\Fset$,
\[g(n,K_t,\Fset) \geq c \cdot \frac{n^{1-\frac{2}{t+1}}}{\log (n \cdot |\Fset|)} .\]
\end{corollary}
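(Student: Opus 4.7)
The plan is to derive the corollary as a direct application of Theorem~\ref{thm:IntroComp} with $H = K_t$. The only nontrivial step is computing $\gamma_0(K_t)$, which reduces to a monotonicity calculation.

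First, I would evaluate $\gamma(K_t)$ from the definition. Since $K_t$ has $h = t$ vertices and $f = \binom{t}{2}$ edges, the numerator in $\gamma(K_t)$ is $t-2$ and the denominator is $\binom{t}{2}-1 = \tfrac{t^2-t-2}{2} = \tfrac{(t-2)(t+1)}{2}$. The factor $(t-2)$ cancels, giving
\[
\gamma(K_t) = \frac{t-2}{\binom{t}{2}-1} = \frac{2(t-2)}{(t-2)(t+1)} = \frac{2}{t+1}.
\]

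Next, I would verify that $\gamma_0(K_t) = \gamma(K_t) = \frac{2}{t+1}$. Let $H'$ be any subgraph of $K_t$ on $h' \leq t$ vertices with $f' \geq 3$ edges, so in particular $h' \geq 3$. For fixed $h'$, the quantity $\gamma(H') = \frac{h'-2}{f'-1}$ is a decreasing function of $f'$, and $f' \leq \binom{h'}{2}$; hence $\gamma(H') \geq \gamma(K_{h'}) = \frac{2}{h'+1}$ by the same computation as above. Since $\frac{2}{h'+1}$ is decreasing in $h'$ and $h' \leq t$, we obtain $\gamma(H') \geq \frac{2}{t+1} = \gamma(K_t)$. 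The infimum is attained by $H' = K_t$ itself, so indeed $\gamma_0(K_t) = \frac{2}{t+1}$.

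Finally, I would plug into Theorem~\ref{thm:IntroComp}: there exists $c = c(K_t) > 0$ such that for every $n$ and every finite field $\Fset$,
\[
g(n, K_t, \Fset) \;\geq\; c \cdot \frac{n^{1-\gamma_0(K_t)}}{\log(n \cdot |\Fset|)} \;=\; c \cdot \frac{n^{1-\frac{2}{t+1}}}{\log(n \cdot |\Fset|)},
\]
which is exactly the claimed bound. There is no genuine obstacle here; the whole content is the observation that the densest subgraph (in the sense of minimizing $\gamma$) of $K_t$ is $K_t$ itself, and the clean cancellation $t(t-1)-2 = (t-2)(t+1)$ that yields the clean exponent $1-\frac{2}{t+1}$.
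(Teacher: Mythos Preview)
Your proposal is correct and follows exactly the paper's approach: apply Theorem~\ref{thm:IntroComp} with $H=K_t$ and compute $\gamma_0(K_t)=\gamma(K_t)=\frac{2}{t+1}$. The paper simply asserts the equality $\gamma_0(K_t)=\gamma(K_t)$ without argument, whereas you supply the (correct) monotonicity justification that the minimum over subgraphs is attained at $K_t$ itself.
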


\begin{proof}
Apply Theorem~\ref{thm:IntroComp} to the graph $H = K_t$, and notice that $\gamma_0(K_t) = \gamma(K_t) = \frac{t-2}{{t \choose 2}-1} = \frac{2}{t+1}$.
\end{proof}

For $H=K_3$, we observe that our lower bound on $g(n,K_3,\Fset)$ is nearly tight.
\begin{proposition}\label{prop:K_3}
There exist constants $c_1,c_2>0$ such that for every integer $n$ and a finite field $\Fset$,
\[ c_1 \cdot \frac{\sqrt{n}}{\log (n \cdot |\Fset|)} \leq g(n,K_3,\Fset) \leq  c_2 \cdot \sqrt{\frac{n}{\log n}}.\]
\end{proposition}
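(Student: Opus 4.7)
The plan is to handle the two inequalities separately, invoking the machinery of Section~\ref{sec:g_comp} for the lower bound and classical Ramsey-theoretic results for the upper bound. The left-hand inequality falls out of Corollary~\ref{cor:K_t} in the case $t=3$: one computes $\gamma_0(K_3)=\gamma(K_3)=(3-2)/(\binom{3}{2}-1)=1/2$, so the corollary produces $g(n,K_3,\Fset) \geq c \cdot n^{1-1/2}/\log(n \cdot |\Fset|) = c \cdot \sqrt{n}/\log(n \cdot |\Fset|)$ for some $c = c(K_3) > 0$, yielding the required constant $c_1$.

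For the upper bound, I would reduce from minrank to chromatic number and then apply the Ajtai--Koml\'os--Szemer\'edi theorem. Any $n$-vertex graph $G$ with $\overline{G}$ being $K_3$-free has $\alpha(G) \leq 2$; equivalently $\overline{G}$ is an $n$-vertex triangle-free graph. Using the standard sandwich $\minrank_\Fset(G) \leq \chi(\overline{G})$ (valid over any field), it suffices to show that every triangle-free graph $H$ on $m$ vertices admits a proper coloring with $O(\sqrt{m/\log m})$ colors. I would do this greedily: by AKS (equivalently, by the off-diagonal Ramsey estimate $R(3,k)=O(k^2/\log k)$), every triangle-free graph on $m$ vertices contains an independent set of size at least $c_0 \sqrt{m \log m}$ for some absolute $c_0>0$. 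Extracting such a set as a new color class, deleting it, and iterating produces the recurrence $m_{i+1} \leq m_i - c_0 \sqrt{m_i \log m_i}$ with $m_0 = n$, which, by comparison with the ODE $dm/di = -c_0 \sqrt{m \log m}$ (whose solution satisfies $\sqrt{m/\log m} \sim \sqrt{n/\log n} - \Theta(i)$), forces $m_i$ to fall below any fixed constant after $O(\sqrt{n/\log n})$ steps; finishing with $O(1)$ extra color classes for the remaining vertices yields the desired bound on $\chi(\overline{G})$.

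I do not anticipate a genuine obstacle here: both halves of the proposition follow by combining results already established in the paper with classical tools. The only minor care needed is in the tail of the greedy recurrence, once $m$ becomes too small for $\sqrt{m \log m}$ to dominate $m$; at that point the $O(1)$ extra colors mentioned above handle the rest, and the loss is comfortably absorbed into the constant $c_2$.
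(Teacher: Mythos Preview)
Your proposal is correct and follows essentially the same route as the paper: the lower bound is exactly the $t=3$ case of Corollary~\ref{cor:K_t}, and the upper bound uses $\minrank_\Fset(G)\le\chi(\overline{G})$ together with the Ajtai--Koml\'os--Szemer\'edi bound, greedily extracting large independent sets to obtain $\chi(\overline{G})=O(\sqrt{n/\log n})$. The paper states this more tersely (and omits your ODE heuristic for the recurrence), but the argument is the same.
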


\begin{proof}
For the lower bound apply Corollary~\ref{cor:K_t} with $t=3$.
To prove the upper bound we need a result of Ajtai et al.~\cite{AjtaiKS80} which says that every triangle-free $n$-vertex graph has an independent set of size $\Omega(\sqrt{n \cdot \log n})$. By repeatedly omitting such independent sets it follows that the chromatic number of such a graph is $O(\sqrt{n / \log n})$.
Now, let $G$ be an $n$-vertex graph whose complement $\overline{G}$ is triangle-free.
We get that ${\minrank}_{\Fset}(G) \leq \chi(\overline{G}) \leq O(\sqrt{n/\log n})$, as required.
\end{proof}

\section{Non-bipartite Graphs}\label{sec:non-bip}

In this section we show that for every non-bipartite graph $H$ there are $H$-free graphs with low minrank over $\R$, confirming Theorem~\ref{thm:IntroNonBi}.
We start with the case where $H$ is an odd cycle, and since every non-bipartite graph contains an odd cycle the general result follows easily.
The proof is by an explicit construction from the following family of graphs.

\begin{definition}\label{def:Kneser}
For integers $m \leq s \leq d$, the graph $\Kneser{d}{s}{m}$ is defined as follows: the vertices are all the $s$-subsets of $[d]$, and two distinct sets $A,B$ are adjacent if $|A \cap B| < m$.
\end{definition}

The minrank of such graphs over finite fields was recently studied in~\cite{Haviv18} using tools from~\cite{AlonBS91}.
The proof technique of~\cite{Haviv18} can be used for the real field as well, as shown below.

\begin{proposition}\label{prop:minrk_Kneser}
For every integers $m \leq s \leq d$,
\[{\minrank}_{\R}(\Kneser{d}{s}{m}) \leq \sum_{i=0}^{s-m}{d \choose i}.\]
\end{proposition}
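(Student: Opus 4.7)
The plan is to explicitly construct a real matrix of rank at most $\sum_{i=0}^{s-m}\binom{d}{i}$ that represents the graph $\Kneser{d}{s}{m}$, following the standard polynomial-representation strategy. Identify each vertex with its characteristic vector $\chi_A \in \{0,1\}^d$, and observe that $\langle \chi_A, \chi_B \rangle = |A \cap B|$. Two distinct vertices $A \neq B$ of $\Kneser{d}{s}{m}$ are non-adjacent precisely when $|A \cap B| \in \{m, m+1, \ldots, s-1\}$. This suggests defining the entries of the candidate matrix $M$ by the polynomial
\[ M_{A,B} \;=\; \prod_{j=m}^{s-1} \bigl( |A \cap B| - j \bigr), \]
which has degree $s-m$ in $|A \cap B|$. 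For distinct non-adjacent $A, B$ the product vanishes, while for $A=B$ we get $\prod_{j=m}^{s-1}(s-j) = (s-m)! \neq 0$, so $M$ represents $\Kneser{d}{s}{m}$ over $\R$.

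Next I would bound $\rank(M)$. The key step is to rewrite $M_{A,B}$ as a bilinear form in low-dimensional vectors. Expanding powers of $|A \cap B| = \sum_{i=1}^{d} (\chi_A)_i (\chi_B)_i$, and repeatedly using the identity $(\chi_A)_i^2 = (\chi_A)_i$ (and analogously for $\chi_B$), each power $|A \cap B|^k$ becomes an integer linear combination of monomials of the form
\[ \prod_{i \in J}(\chi_A)_i (\chi_B)_i \;=\; \mathbf{1}[J \subseteq A]\cdot\mathbf{1}[J \subseteq B], \qquad |J| \leq k. \]
Applying this to the polynomial of degree $s-m$ defining $M$, one obtains coefficients $c_J \in \R$ such that
\[ M_{A,B} \;=\; \sum_{J \subseteq [d],\,|J| \leq s-m} c_J \cdot \mathbf{1}[J \subseteq A]\cdot\mathbf{1}[J \subseteq B]. \]
Defining $v_A, w_A \in \R^{\mathcal{J}}$ on $\mathcal{J} = \{J \subseteq [d] : |J| \leq s-m\}$ by $(v_A)_J = c_J\cdot\mathbf{1}[J \subseteq A]$ and $(w_B)_J = \mathbf{1}[J \subseteq B]$, we have $M_{A,B} = \langle v_A, w_B \rangle$, and therefore $\rank(M) \leq |\mathcal{J}| = \sum_{i=0}^{s-m}\binom{d}{i}$, as claimed.

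I do not anticipate a real obstacle: the construction and its analysis are essentially routine once the polynomial $\prod_{j=m}^{s-1}(|A\cap B|-j)$ is written down. The only mildly delicate point is the passage from a univariate polynomial in $|A \cap B|$ to an explicit low-dimensional bilinear factorization, which is handled cleanly by the idempotency $(\chi_A)_i^2 = (\chi_A)_i$ that reduces each expanded monomial to a set-indicator product.
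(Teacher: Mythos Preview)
Your proposal is correct and follows essentially the same argument as the paper: both define $M_{A,B}=\prod_{j=m}^{s-1}(|A\cap B|-j)$, use the idempotency $(\chi_A)_i^2=(\chi_A)_i$ to multilinearize, and bound the rank by the number of multilinear monomials of degree at most $s-m$ in $d$ variables. Your presentation is slightly more explicit in observing that the resulting expansion is symmetric, with both the $x$- and $y$-monomials indexed by the same subset $J$, but this is a cosmetic difference only.
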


\begin{proof}
Let $f: \{0,1\}^d \times \{0,1\}^d \rightarrow \R$ be the function defined by
\[ f(x,y) = \prod_{j=m}^{s-1}{ \Big ( \sum_{i=1}^{d}{x_i y_i -j}\Big )}\]
for every $x,y \in \{0,1\}^d$.
Expanding $f$ as a linear combination of monomials, the relation $z^2 = z$ for $z \in \{0,1\}$ implies that one can reduce to $1$ the exponent of each variable occuring in a monomial. It follows that $f$ can be represented as a multilinear polynomial in the $2d$ variables of $x$ and $y$. By combining terms involving the same monomial in the variables of $x$, one can write $f$ as
\[ f(x,y) = \sum_{i=1}^{R}{g_i(x) h_i(y)} \]
for an integer $R$ and functions $g_i, h_i : \{0,1\}^d \rightarrow \R$, $i \in [R]$, such that the $g_i$'s are distinct multilinear monomials of total degree at most $s-m$ in $d$ variables. It follows that $R \leq \sum_{i=0}^{s-m}{d \choose i}$.

Now, let $M_1$ and $M_2$ be the $2^d \times R$ matrices whose rows are indexed by $\{0,1\}^d$ and whose columns are indexed by $[R]$, defined by $(M_1)_{x,i} = g_i(x)$ and $(M_2)_{x,i} = h_i(x)$. Then, the matrix $M = M_1 \cdot M_2^T$ has rank at most $R$ and for every $x,y \in\{0,1\}^d$ it holds that $M_{x,y} = f(x,y)$.

Finally, let $V$ be the vertex set of $\Kneser{d}{s}{m}$, that is, the collection of all $s$-subsets of $[d]$, and identify every vertex $A \in V$ with an indicator vector $c_A \in \{0,1\}^d$ in the natural way. We claim that the matrix $M$ restricted to $V \times V$ represents the graph $\Kneser{d}{s}{m}$. Indeed, for every $A,B \in V$ we have
\[M_{c_A, c_B} = f(c_A,c_B) = \prod_{j=m}^{s-1}{ \Big ({|A \cap B| -j}\Big )}.\]
Hence, for every $A \in V$ we have $|A|=s$ and thus $M_{c_A,c_A} \neq 0$, whereas for every distinct non-adjacent $A,B \in V$ we have $m \leq |A \cap B|\leq s-1$ and thus $M_{c_A,c_B} = 0$. Since the restriction of $M$ to $V \times V$ has rank at most $R$ it follows that ${\minrank}_\R(\Kneser{d}{s}{m}) \leq R$, and we are done.
\end{proof}

We turn to identify graphs $\Kneser{d}{s}{m}$ with no short odd cycles.
For this purpose, take an even integer $d$, $s = \frac{d}{2}$, and $m = \eps \cdot d$ for a small constant $\eps>0$.
Every path in these graphs is a sequence of $\frac{d}{2}$-subsets of $[d]$ such that the intersection size of every two consecutive sets is small. This implies, for a sufficiently small $\eps$, that the sets in the even positions of the path are almost disjoint from the first set, whereas the sets in the odd positions of the path share with it many elements, hence such a graph contains no short odd cycle.
This is shown formally in the following lemma.

\begin{lemma}\label{lemma:cycle_K}
Let $\ell \geq 3$ be an odd integer.
For every even integer $d$ and an integer $m \leq \frac{d}{2\ell}$, the graph $\Kneser{d}{\frac{d}{2}}{m}$ contains no odd cycle of length at most $\ell$.
\end{lemma}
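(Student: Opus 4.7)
The plan is to argue, by induction on $i$, that for any walk $A_0, A_1, \ldots, A_k$ in $\Kneser{d}{d/2}{m}$ starting at a fixed vertex $A_0$, the intersection $|A_i \cap A_0|$ is pinned down to be small when $i$ is odd and close to $d/2$ when $i$ is even. Concretely, I would show that
\[
|A_i \cap A_0| < i m \text{ if } i \text{ is odd,} \qquad |A_i \cap A_0| > \tfrac{d}{2} - i m \text{ if } i \text{ is even.}
\]
The base case $i=0$ is $|A_0 \cap A_0| = d/2$, and $i=1$ is immediate from the edge condition $|A_0 \cap A_1| < m$.

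The inductive step rests on two identities that hold because $|A_0| = |A_i| = d/2$:
\[
|A_0 \cap \bar{A}_i| = \tfrac{d}{2} - |A_0 \cap A_i|, \qquad |\bar{A}_0 \cap \bar{A}_i| = |A_0 \cap A_i|.
\]
To pass from odd $i$ to even $i+1$, I would decompose $A_{i+1} \cap \bar{A}_0$ via the partition $\{A_i, \bar{A}_i\}$ of $[d]$: the portion inside $A_i$ has size at most $|A_{i+1} \cap A_i| < m$, and the portion inside $\bar{A}_i$ has size at most $|\bar{A}_0 \cap \bar{A}_i| = |A_0 \cap A_i| < i m$ by the inductive hypothesis, so $|A_{i+1} \cap \bar{A}_0| < (i+1) m$ and hence $|A_{i+1} \cap A_0| > d/2 - (i+1) m$. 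The opposite parity step is symmetric: decomposing $A_{i+1} \cap A_0$ via $\{A_i, \bar{A}_i\}$ and combining $|A_{i+1} \cap A_i| < m$ with $|A_0 \cap \bar{A}_i| = d/2 - |A_0 \cap A_i| < i m$ yields $|A_{i+1} \cap A_0| < (i+1) m$.

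To conclude, suppose toward a contradiction that $\Kneser{d}{d/2}{m}$ contains an odd cycle $A_0, A_1, \ldots, A_k = A_0$ with $k \leq \ell$. Applying the odd case of the claim at $i = k$ yields
\[
\tfrac{d}{2} = |A_0 \cap A_0| = |A_k \cap A_0| < k m \leq \ell m \leq \ell \cdot \tfrac{d}{2\ell} = \tfrac{d}{2},
\]
a contradiction. The only mildly subtle point is making sure the strict edge inequality $|A_i \cap A_{i+1}| < m$ propagates through each inductive step; this is automatic because every step sums two strict bounds, and it is exactly what allows the argument to succeed at the boundary $m = d/(2\ell)$. I do not foresee any other obstacle; the whole argument is essentially careful bookkeeping with intersection sizes.
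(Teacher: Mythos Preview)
Your proposal is correct and follows essentially the same approach as the paper: both track, by induction along a walk, that the intersection with the initial vertex alternates between ``small'' (near $im$) at one parity and ``large'' (near $d/2 - im$) at the other, using the same decomposition via $\{A_i, \overline{A_i}\}$ and the identity $|\overline{A_0} \cap \overline{A_i}| = |A_0 \cap A_i|$ that holds because all sets have size $d/2$. The only cosmetic differences are that the paper indexes from $A_1$ rather than $A_0$ (swapping which parity is ``large''), uses non-strict inequalities, and concludes by showing the final edge of a putative cycle cannot exist rather than by closing the walk at $A_k = A_0$.
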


\begin{proof}
Fix an odd integer $\ell \geq 3$, an even integer $d$, and an integer $m \leq \frac{d}{2\ell}$.
We prove that for every odd integer $\ell'$, such that $3 \leq \ell' \leq \ell$, the graph $\Kneser{d}{\frac{d}{2}}{m}$ contains no cycle of length $\ell'$.
For such an $\ell'$, let $A_1,A_2,\ldots,A_{\ell'}$ be a sequence of $\ell'$ vertices in the graph, i.e., $\frac{d}{2}$-subsets of $[d]$. Assuming that for every $i \leq \ell'-1$ the vertices $A_i$ and $A_{i+1}$ are adjacent in the graph, that is, $|A_i \cap A_{i+1}| < m$, our goal is to show that $A_1$ and $A_{\ell'}$ are not.

To this end, we argue that for every $i$, such that $0 \leq i \leq \frac{\ell'-1}{2}$, we have
\begin{eqnarray}\label{eq:A_i}
|A_1 \cap A_{2i+1}| \geq \frac{d}{2}-2i \cdot m.
\end{eqnarray}
We prove this claim by induction on $i$. The case $i=0$ follows immediately from $|A_1|=\frac{d}{2}$.
Assume that~\eqref{eq:A_i} holds for $i-1$, that is, $|A_1 \cap A_{2i-1}| \geq \frac{d}{2}-(2i-2) \cdot m$.
Observe that this implies that
\begin{eqnarray*}
|A_1 \cap A_{2i}| &=& | A_1 \cap A_{2i} \cap A_{2i-1} | + | A_1 \cap A_{2i} \cap \overline{A_{2i-1}} | \\
& \leq & |A_{2i-1} \cap A_{2i}| + | A_1 \cap \overline{A_{2i-1}} | \\
& \leq & m + |A_1|- | A_1 \cap A_{2i-1} | \\
& \leq & m + \frac{d}{2} - \Big ( \frac{d}{2}-(2i-2) \cdot m \Big ) = (2i-1) \cdot m,
\end{eqnarray*}
where in the second inequality we have used $|A_{2i-1} \cap A_{2i}| < m$.
We proceed by proving~\eqref{eq:A_i} for $i$. Observe that
\begin{eqnarray*}
|A_1 \cap A_{2i+1}| &=& |A_{2i+1}| - | \overline{A_1} \cap A_{2i+1} | \\
&=& |A_{2i+1}| - | \overline{A_1} \cap A_{2i+1} \cap A_{2i} | - | \overline{A_1} \cap A_{2i+1} \cap \overline{A_{2i}} | \\
& \geq & \frac{d}{2} - m - | \overline{A_1} \cap \overline{A_{2i}} |,
\end{eqnarray*}
where we have used $|A_{2i} \cap A_{2i+1}| < m$.
Notice that
\[ |\overline{A_1} \cap \overline{A_{2i}}| = d - |A_1 \cup A_{2i}| = d-(|A_1|+|A_{2i}|-|A_1 \cap A_{2i}|) = |A_1 \cap A_{2i}|. \]
It follows that
\[|A_1 \cap A_{2i+1}| \geq \frac{d}{2}-m-|A_1 \cap A_{2i}| \geq \frac{d}{2}-m-(2i-1)\cdot m = \frac{d}{2}-2i\cdot m,\]
completing the proof of~\eqref{eq:A_i}.

Finally, applying~\eqref{eq:A_i} to $i = \frac{\ell'-1}{2}$, using the assumption $m \leq \frac{d}{2\ell}$, we get that
\[|A_1 \cap A_{\ell'}|  \geq \frac{d}{2}-(\ell'-1) \cdot m = \frac{d}{2}-\ell' \cdot m+m \geq  \frac{d}{2}-\ell \cdot m +m \geq m,\]
hence $A_1$ and $A_{\ell'}$ are not adjacent in the graph $\Kneser{d}{\frac{d}{2}}{m}$. It thus follows that the graph contains no cycle of length $\ell'$, as desired.
\end{proof}

Equipped with Proposition~\ref{prop:minrk_Kneser} and Lemma~\ref{lemma:cycle_K}, we obtain the following.

\begin{theorem}\label{thm:Cycles}
For every odd integer $\ell \geq 3$ there exists $\delta = \delta(\ell) >0$ such that for every sufficiently large integer $n$, there exists an $n$-vertex graph $G$ with no odd cycle of length at most $\ell$ such that
\[{\minrank}_{\R}(G) \leq n^{1-\delta}.\]
\end{theorem}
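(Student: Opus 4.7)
The plan is to take $G$ to be an appropriate induced subgraph of the generalized Kneser graph $\Kneser{d}{d/2}{m}$, where $d$ is chosen to match the target vertex count $n$ and $m = \lfloor d/(2\ell) \rfloor$ is small enough for Lemma~\ref{lemma:cycle_K} to apply. Concretely, fix an odd integer $\ell \geq 3$. Given large $n$, I would let $d$ be the smallest even integer such that $\binom{d}{d/2} \geq n$; the standard estimate $\binom{d}{d/2} = \Theta(2^d/\sqrt{d})$ then yields $d = \log_2 n + O(\log \log n)$. By Lemma~\ref{lemma:cycle_K}, the graph $\Kneser{d}{d/2}{m}$ contains no odd cycle of length at most $\ell$.

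Let $G$ be any $n$-vertex induced subgraph of $\Kneser{d}{d/2}{m}$. The absence of short odd cycles is inherited by every subgraph, and since a representing matrix restricts to a principal submatrix representing the induced subgraph, we have $\minrank_\R(G) \leq \minrank_\R(\Kneser{d}{d/2}{m})$. Proposition~\ref{prop:minrk_Kneser} therefore gives
\[ \minrank_\R(G) \leq \sum_{i=0}^{d/2 - m} \binom{d}{i}. \]
The upper summation index is at most $\alpha d + 1$ where $\alpha = (\ell-1)/(2\ell) < 1/2$, so the standard entropy bound $\sum_{i \leq \beta d} \binom{d}{i} \leq 2^{H(\beta) d}$ (valid for $\beta \leq 1/2$, with $H$ the binary entropy) yields $\minrank_\R(G) \leq 2^{H(\alpha) d} \cdot \poly(d)$.

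Converting back using $d \leq \log_2 n + O(\log \log n)$ turns this into $\minrank_\R(G) \leq n^{H(\alpha) + o(1)}$. Since $H(\alpha) < 1$, any $\delta \in (0, 1-H(\alpha))$ absorbs the $o(1)$ term for sufficiently large $n$, giving $\minrank_\R(G) \leq n^{1-\delta}$, as required.

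I do not expect any genuine obstacle here: Proposition~\ref{prop:minrk_Kneser} and Lemma~\ref{lemma:cycle_K} supply both ingredients, and the remaining work is the routine entropy estimate on partial sums of binomial coefficients together with the standard approximation of the central binomial coefficient. The only minor subtlety worth flagging is that $\Kneser{d}{d/2}{m}$ has a fixed number of vertices $\binom{d}{d/2}$, which need not equal $n$; this is circumvented by passing to an induced subgraph, since both the forbidden-subgraph property and the minrank upper bound are monotone under that operation.
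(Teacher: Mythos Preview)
Your proposal is correct and follows essentially the same route as the paper: build $\Kneser{d}{d/2}{m}$ with $m \approx d/(2\ell)$, invoke Lemma~\ref{lemma:cycle_K} for the odd-cycle-free property and Proposition~\ref{prop:minrk_Kneser} together with the entropy bound for the minrank estimate, then pass to an $n$-vertex induced subgraph. The only cosmetic differences are that the paper takes $d$ divisible by $2\ell$ (so $m=d/(2\ell)$ exactly) while you take $d$ even and use $m=\lfloor d/(2\ell)\rfloor$, and you spell out explicitly why minrank is monotone under induced subgraphs.
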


\begin{proof}
Fix an odd integer $\ell \geq 3$.
For an integer $d$ divisible by $2 \ell$, consider the graph $G = \Kneser{d}{\frac{d}{2}}{m}$ where $m = \frac{d}{2 \ell}$.
By Lemma~\ref{lemma:cycle_K}, $G$ contains no odd cycle of length at most $\ell$.
As for the minrank, Proposition~\ref{prop:minrk_Kneser} implies that
\[{\minrank}_{\R}(G) \leq \sum_{i=0}^{d/2-m}{d \choose i} \leq 2^{H(\frac{1}{2}-\frac{m}{d}) \cdot d} = 2^{H(\frac{1}{2}-\frac{1}{2\ell}) \cdot d},\]
where $H$ stands for the binary entropy function.
Since $G$ has $|V| = {d \choose {d/2}} = 2^{(1-o(1)) \cdot d}$ vertices, for any $\delta>0$ such that $H(\frac{1}{2}-\frac{1}{2\ell}) < 1-\delta$ we have ${\minrank}_{\R}(G) \leq |V|^{1-\delta}$ for every sufficiently large integer $d$.
The proof is completed by considering, for every sufficiently large integer $n$, some $n$-vertex subgraph of the graph defined above, where $d$ is the smallest integer divisible by $2\ell$ such that $n \leq {d \choose {d/2}}$.
\end{proof}


Now, Theorem~\ref{thm:IntroNonBi} follows easily from Theorem~\ref{thm:Cycles}.

\begin{proof}[ of Theorem~\ref{thm:IntroNonBi}]
Let $H$ be a non-bipartite graph. Then, for some odd integer $\ell \geq 3$, the cycle $C_\ell$ is a subgraph of $H$.
By Theorem~\ref{thm:Cycles}, there exists $\delta > 0$ such that for every sufficiently large integer $n$, there exists an $n$-vertex $C_\ell$-free graph $G$ satisfying ${\minrank}_{\R}(G) \leq n^{1-\delta}$.
Since every $C_\ell$-free graph is also $H$-free, the result follows.
\end{proof}

\begin{remark}
As mentioned in the introduction, Theorem~\ref{thm:IntroNonBi} implies a lower bound on $g(n,H,\R)$ for every non-bipartite graph $H$ (see Corollary~\ref{cor:IntroNonBi}).
We note that upper bounds on certain Ramsey numbers can be used to derive upper bounds on $g(n,H,\Fset)$ for a general field $\Fset$.
For example, it was shown in~\cite{ErdosFRS78}
that for every $\ell \geq 3$, every $n$-vertex $C_\ell$-free graph has an independent set of size $\Omega( n^{1-1/k} )$ for $k = \lceil \frac{\ell}{2} \rceil$ (see~\cite{CaroLRZ00,Sudakov02} for slight improvements).
By repeatedly omitting such independent sets it follows that the chromatic number of such a graph is $O(n^{1/k})$.
This implies that every $n$-vertex graph $G$ whose complement is $C_\ell$-free satisfies ${\minrank}_{\Fset}(G) \leq \chi(\overline{G}) \leq O(n^{1/k})$, hence $g(n,C_\ell,\Fset) \leq O(n^{1/k})$.
\end{remark}

\section*{Acknowledgements}
We are grateful to Alexander Golovnev and Pavel Pudl\'ak for useful discussions and to the anonymous referees for their valuable suggestions.

\bibliographystyle{abbrv}
\bibliography{minrk_free}

\end{document}